\newcommand{\tensor}[1]{\ensuremath{\boldsymbol{#1}}}
\newcommand{\alert}[1]{\textcolor{red}{#1}}
\DeclareMathOperator{\Grad}{\nabla^{\tensor X}}
\DeclareMathOperator{\tr}{tr}
\theoremstyle{remark}
\renewcommand{\vec}[1]{\ensuremath{\boldsymbol{#1}}}
\DeclareMathOperator*{\argmin}{arg\,min}
\newcommand{\parder}[2]{\frac{\partial #1}{\partial #2}}
\DeclareFixedFont{\ttb}{T1}{txtt}{bx}{n}{9} 
\DeclareFixedFont{\ttm}{T1}{txtt}{m}{n}{9}  
\definecolor{deepblue}{rgb}{0,0,0.5}
\definecolor{deepred}{rgb}{0.6,0,0}
\definecolor{deepgreen}{rgb}{0,0.5,0}
\definecolor{terminalblack}{rgb}{0.25,0.25,0.25}
\definecolor{f77green}{rgb}{0,0.7,0}
\definecolor{f77blue}{rgb}{0.0,0,0.7}
\newcounter{python}
\newcounter{fortran}
\lst@UserCommand\lstlistofpython{\bgroup
    
    \let\lst@temp\@starttoc \def\@starttoc##1{\lst@temp{lop}}%
    \tableofcontents \egroup}
  \let\c@lstlisting=\c@python
  \xpatchcmd*{\lst@MakeCaption}{lol}{lop}{}{}%
\ttb\color{deepblue},
\ttb\color{deepred},
\ttb\color{deepgreen},
\lst@UserCommand\lstlistoffortran{\bgroup
    
    \let\lst@temp\@starttoc \def\@starttoc##1{\lst@temp{lof}}%
    \tableofcontents \egroup}
  \let\c@lstlisting=\c@fortran
  \xpatchcmd*{\lst@MakeCaption}{lol}{lof}{}{}%
\renewcommand{\alert}[1]{#1}
\title{Physics-constrained symbolic model discovery for polyconvex incompressible hyperelastic materials
}
\begin{document}


\author{Bahador Bahmani      \and 
        WaiChing Sun 
}

\institute{Corresponding author: WaiChing Sun \at
Associate Professor, Department of Civil Engineering and Engineering Mechanics, 
 Columbia University , 
 614 SW Mudd, Mail Code: 4709, 
 New York, NY 10027
  Tel.: 212-854-3143, 
  Fax: 212-854-6267, 
  \email{wsun@columbia.edu}        
}

\date{Received: \today / Accepted: date}

\maketitle

\begin{abstract}
We present a machine learning framework capable of consistently inferring mathematical expressions of hyperelastic energy 
functionals for incompressible materials from sparse experimental data and physical laws. 
To achieve this goal, we propose a polyconvex neural additive model (PNAM) that enables us to express the hyperelastic model 
in a learnable feature space while enforcing polyconvexity. An upshot of this feature space obtained via the PNAM is that (1) it is spanned by a set of univariate basis functions that can be re-parametrized with a more complex mathematical form,  and (2) the resultant elasticity model is guaranteed to fulfill the polyconvexity, which ensures that the acoustic tensor remains elliptic for any deformation. 
To further improve the interpretability,  we use genetic programming to convert each univariate basis into a compact mathematical expression. The resultant multi-variable mathematical models obtained from this proposed framework are not only more interpretable 
but are also proven to fulfill physical laws.  By controlling the compactness of the learned symbolic form,  the machine learning-generated mathematical model also requires fewer arithmetic operations than its deep neural network counterparts during deployment. 
This latter attribute is crucial for scaling large-scale simulations where the constitutive responses of every integration point must be updated within each incremental time step. 
We compare our proposed model discovery framework against other state-of-the-art alternatives to assess the robustness and efficiency of the training algorithms and examine the trade-off between interpretability,  accuracy,  and precision of the learned 
symbolic hyperelastic models obtained from different approaches. 
Our numerical results suggest that our approach extrapolates well outside the training data regime due to the precise incorporation of physics-based knowledge.

%
%
%
%

\end{abstract}

\keywords{interpretable model, hyperelasticity, symbolic regression, polyconvex neural additive model}

\section{Introduction}
Engineering analysis and computer simulations of solid behaviors often involve solvers that predict admissible solutions 
that fulfill a set of constraints. As explained in \citet{kirchdoerfer2016data}, these constraints can be (1) balance principles, 
which are often regarded as ground truth, and (2) constitutive laws, of which the legitimacy of employing such supplemental constraints depends on the specific situation upon which the model is used \citep{boyce2000constitutive}. 
For instance, while vehicle crash simulations for safety analysis \citep{le2018parametric, barbani2014development} 
and those for computer animation and gaming \citep{grinspun2003discrete} can both benefit from improving high fidelity, the demands for robustness, accuracy, efficiency, and risk tolerance in predictions are significantly different for these two applications. 
This difference often leads to different modeling choices, ranging from simplified models that offer speed and robustness, often at the expense of accuracy, to highly sophisticated models or multiscale constitutive updates from representative elementary volumes that exhibit higher fidelity, precision, and accuracy,  even at the expense of 
efficiency \citep{dafalias1984modelling}. 

For soft materials that remain elastic while undergoing large isochoric deformation, modeling frameworks, such as hypoelasticity (cf. \citet{truesdell1955hypo, green1956hypo, freed2010hypoelastic}) and hyperelasticity (cf. \cite{ogden1997non, holzapfel2000new, mihai2017family, mooney1940theory, rivlin1951large}) are some of the more popular choices for these different applications. In the former framework,  one may directly establish a relationship between a pair of stress and strain measures. 
For instance, one may compose an isotropic function that maps the left Cauchy-Green tensor $\tensor{b}$ to the Cauchy stress for isotropic elastic materials \citep{truesdell1955hypo}. Hyperelastic models, on the other hand, provide an alternative strategy where one considers that the deformation increases the Helmholtz free energy, and the stress and its corresponding tangent stiffness are the Jacobian and the Hessian of this free energy with respect to the deformation measure. 
This latter approach provides a convenient and flexible way to fulfill thermodynamics consistency, enforce different symmetries, and, if desired, ensure the stability of the material models by enforcing properties (e.g., growth condition, convexity, polyconvexity) on the elastic energy functionals.  
As such, prior knowledge of the material behaviors, such as the lattice structure of a crystal (for symmetries) and the shape memory effect of certain alloys (for multiple potential wells), can be easily incorporated into the modeling process. 
If this prior knowledge can be incorporated by deducing the specific \textit{form} of the energy functional, then the last remaining task to complete the model is to find the actual \textit{parametrization} of the model that fulfills all constraints of both experimental data and prior physical knowledge.
%
%
%

In addition to hand-crafted mathematical expressions, a variety of alternatives, such as artificial neural networks (ANNs) \citep{vlassis2020geometric, liu2020generic, thakolkaran2022nn, klein2022polyconvex, tac2022data}, Gaussian processes \citep{frankel2020tensor,fuhg2022physics}, and symbolic regression (SR) \citep{abdusalamov2023automatic}, are often used to generate closure of the constitutive laws. 
In the former case,  feedforward ANNs may provide the expressivity (the ability to fit any complex data) necessary to yield precise models \citep{hornik1989multilayer, hsu2021approximation}. However, underfitting,  overfitting,  the lack of interpretability,  and the incompatibility with  physics constraints could all negatively impact the quality of the learned models. 
While the underfitting and overfitting issues can be circumvented with hyperparameter tuning, both the lack of interpretability and the incompatibility with known physics constraints are issues that make the resultant models not feasible for the intended engineering applications, especially of high consequences,  such as patient-specific simulations or design of structural components for civil infrastructure systems. 
For example, \citet{shen2004neural} and \citet{liang2008neural} proposed training a hyperelastic energy functional for elastomeric foams. \cite{liang2008neural} use strain invariants as inputs for an ANN and conducted training based on the calculated stress as labeled data for supervised learning. 
%
Meanwhile, \citet{le2015computational} introduce neural network hyperelastic models to upscale constitutive responses for representative elementary volumes where the energy, stress, and stiffness  are obtained from ANNs.
There has been a rapidly growing body of work on hyperelastic models parametrized by ANNs trained using Sobolev norms \citep{vlassis2020geometric, vlassis2021sobolev, vlassis2022molecular}, neural ordinary differential equations \citep{tac2022data} as well as multi-objective optimization problems that attempt to fulfill all data and physics constraints \citep{klein2022polyconvex, fernandez2022material}.

This paper aims to formulate a feasible machine learning framework that can consistently generate hyperelastic models 
whose properties can be easily interpreted and fulfill all known physics constraints.  A critical technical barrier we would like to overcome 
is the difficulty of determining the expression tree in the SR that simultaneously satisfies the polyconvexity of the learned model. 
While there have been works on generating polyconvex neural network models via neural networks,  to the authors' best knowledge,  
this contribution is the first attempt to develop a machine learning algorithm to generate a polyconvex hyperelastic model expressed via mathematical expressions. 
To achieve this objective,  we use a parameterization strategy similar to the neural additive model \citep{agarwal2021neural} and the recently proposed quadratic extension \citep{bahmani2023discovering},  of which the feature space is spanned by univariate functions obtained from neural networks while improving the machine learning algorithm by introducing physics constraints and polyconvexity to ensure desirable properties of the learned hyperelastic models. 
To improve interpretability without comprising expressivity and accuracy,  we introduce an additional step where we use SR on the feature space such that it can be approximated by basis functions expressed analytically. 
As polyconvexity can be guaranteed by energy functionals written in the specific additive form (cf. \citet{hartmann2003polyconvexity}), this feature enables us to express the resultant polyconvex hyperelastic model as a function of the features of the strain invariants.

\subsection{Reviews of Physics Constraints for Elasticity Models}
Previous machine learning models have been trained 
with prior physical knowledge incorporated as constraints (e.g., \citep{teichert2019machine,liu2020generic,masi2021thermodynamics,tac2022data,vlassis2022molecular}).
The basic strategy is similar to those used in physics-informed neural network paradigms considered seminal by many for 
solving partial differential equations where a variety of loss functions are employed to ensure that 
the learned solution satisfies physics constraints \citep{raissi2019physics,lagaris1998artificial}.
However, as the training is designed to minimize,  not eliminate,  discrepancies,  there is a possibility of violating these constraints when dealing with data not used during training,  especially in the extrapolation regime \citep{bronstein2021geometric}. 
Furthermore,  enforcing multiple constraints into the loss function may lead to a multi-objective optimization problem
where gradient conflicts among different objectives may further complicate the search for global optima \citep{yu2020gradient, bahmani2021training},  which is an NP-hard problem \citep{jin2016provable}.

Several neural network architectures have been proposed to explicitly incorporate all or a subset of physical laws by design for constitutive modeling \citep{heider2020so, xu2021learning,linka2021constitutive,tac2022data,as2022mechanics,chen2022polyconvex, cai2023equivariant}.  The by-design strategy can be selecting the optimal parameterization of input variables (e.g., using strain invariants instead of the strain tensor for isotropic materials) or modifying neural network architectures to preserve symmetry,  invariance,  and equivariance. 
By fulfilling the proposed physics constraints by design,  the learned models that inherently fulfill the physics constraints are more robust, especially in the data-limited regime.  
In fact,  this incorporation of physics constraints is in line with the history of hand-crafted hyperelastic models in which material symmetry has already been heavily leveraged to yield a specific form of mathematical expressions that reduces the number of independent variables, enforce symmetry and thermodynamics constraints \citep{ogden1997non, schroder2003invariant, holzapfel2000new},  
and induce desirable properties, such as polyconvexity \citep{hartmann2003polyconvexity,
schroder2003invariant,
schroder2005variational,
schroder2010poly} and quasiconvexity \citep{ball1987does, shirani2022convexity}. 
In this paper, we will adopt this latter by-design strategy while leveraging the power of the neural additive model 
and symbolic regression to further improve the model obtained from the machine learning algorithm.

\subsection{Reviews on Interpretable Machine Learning Constitutive Laws}
Learnable parameters, such as weights and biases, parametrize the learned function obtained from training neural networks. As such, the current trend of increasingly deep and large neural networks often leads to significant challenges in interpreting and 
examining the \textit{global} property of machine learning models \citep{dayhoff2001artificial, oh2019towards}. For instance, while it is possible to use a sampling technique to test the robustness of the learned model for a set of strain inputs against constraints, satisfying the physics constraints for a subset of data points only estimates the population loss. Model accuracy shown in the sampling test is only a 
necessary but not sufficient condition for generalizability. 

An obvious strategy to circumvent this issue is to derive alternative parameterization that may lead to more compact mathematical expressions where analyses (such as calculating the acoustic wave speed and detecting the loss of ellipticity) typically performed on hand-crafted models can be conducted. In contrast to ANN-based methods, SR methods are free-form approaches where the equation form is discovered in a data-driven manner using gradient-free methods like genetic programming. The application of SR algorithms in data-driven mechanics has proven effective in discovering yield functions for plasticity \citep{versino2017data,bomarito2021development,park2021multiscale}. While initial attempts do not enforce physics knowledge, data augmentation via physical intuitions has shown improvements in the SR performance \citep{versino2017data}. In a recent work by \cite{abdusalamov2023automatic}, a SR method for discovering hyperelastic materials is introduced, directly utilizing energy functionals. This approach offers advantages in terms of thermodynamics consistency. However, it can be computationally expensive due to the requirement for symbolic gradient calculations during optimization iterations.

The advantage of SR machine learning methods is their ability to provide explicit equation forms, which are often simpler than neural network operations. However, various challenges have hindered their popularity compared to ANN-based methods in mechanics. 
Firstly, their lack of scalability, especially in multidimensional data settings, is attributed to the combinatorial nature of their search space. Secondly, incorporating mechanistic constraints, especially those related to gradient operations like ellipticity, is not a straightforward task in these algorithms. This difficulty arises from their use of gradient-free optimizers and the costly process of symbolic gradient calculations, in contrast to the efficiency of automatic differentiation methods used in ANN-based methods.

Sparse regression within a predefined library of modes \citep{brunton2016discovering} offers a potential method to bridge the gap between scalability and interpretability in model identification. \citet{flaschel2021unsupervised} and \citet{wang2021inference} develop a material discovery formulation from a predefined library of material models. Their approach indirectly discovers a material model from displacement and force data over the boundary of the material sample. Similarly, \cite{linka2023new} employ a library of modes inspired by classical constitutive models and prior physical knowledge to directly learn energy functionals from strain and stress data. Each mode's contribution to the final prediction is trained using a gradient-based optimizer with automatic differentiation as the backbone algorithm. Nevertheless, such predefined modes may introduce significant bias in the modeling and might restrict the learning of complex modes not present in the library.   For instance, \citet{wang2022establish} parameterize the stress tensor as a polynomial function of the strain tensor. However, since physics constraints, such as material symmetry and thermodynamics constraints, are not explicitly enforced in the formulation, the model is not guaranteed to be compatible with these constraints.

\remark{\textbf{Alternative approaches for elasticity problems}
The model-free or distance minimization method \alert{\citep{kirchdoerfer2016data}} is extended to finite deformation \citep{nguyen2018data,platzer2021finite}, eliminating the need for any model assumptions. Another model-free approach known as What-You-Prescribe-Is-What-You-Get (WYPIWYG) \citep{crespo2017wypiwyg} is also employed. Following the WYPIWYG idea, methods based on spline shape functions are also introduced \citep{amores2019average,moreno2020reverse,akbari2022reverse}. Some studies formulate the learning of constitutive laws as a manifold learning problem, searching for the response surface rather than using conventional surrogate models for input-to-output mapping \citep{ibanez2018manifold,he2021deep,gonzalez2020data,bahmani2022manifold,bahmani2023distance}.}

\subsection{Notations and Organization of the Remaining Paper}
The remaining content of this paper is organized as follows. In Section \ref{sec:mechEq}, we review crucial elements needed to construct our model structure, adhering to physical knowledge and mechanical properties of incompressible hyperelastic materials. These properties include isotropy, material objectivity, polyconvexity, and coercivity. Our model discovery method is outlined in Section \ref{sec:modelDisc}, where we summarize our two-step approach. We then describe the specific structure of the neural network and SR algorithms used to ensure polyconvexity of the final discovered energy functional. In Section \ref{sec:expReduce}, we provide reduced forms of the proposed formulation to handle common experimental setups, which will be useful for calibrating the model from experimental data. To demonstrate the effectiveness of our framework, we find two symbolic models for real and synthetic data in Section \ref{sec:numExamp}. Additionally, we provide a discussion on the formal analysis of the discovered models in Section \ref{sec:discuss}.

As for notations and symbols, bold-faced and blackboard bold-faced letters denote tensors (including vectors which are rank-one tensors); 
the symbol '$\cdot$' denotes a single contraction of adjacent indices of two tensors 
(e.g.,\ $\vec{a} \cdot \vec{b} = a_{i}b_{i}$ or $\tensor{c} \cdot \tensor{d} = c_{ij}d_{jk}$); 
the symbol `:' denotes a double contraction of adjacent indices of tensors of rank two or higher
(e.g.,\ $\mathbb{C} : \vec{\varepsilon}$ = $C_{ijkl} \varepsilon_{kl}$); 
the symbol `$\otimes$' denotes a juxtaposition of two vectors 
(e.g.,\ $\vec{a} \otimes \vec{b} = a_{i}b_{j}$)
or two symmetric second-order tensors 
[e.g.,\ $(\tensor{\alpha} \otimes \tensor{\beta})_{ijkl} = \alpha_{ij}\beta_{kl}$]. 
We also define identity tensors: $\tensor{I} = \delta_{ij}$ and $\mathbb{I} = (\delta_{ik}\delta_{jl} + \delta_{il}\delta_{jk})/2$, where $\delta_{ij}$ is the Kronecker delta.
As for sign conventions, unless specified, tensile stress and dilative pressure are considered positive.

\section{Hyperelasticity Formulation}
\label{sec:mechEq}
In this section, we establish the theoretical foundation upon which we construct our modeling structure, ensuring the incorporation of physical knowledge in the model. First, we delve into the kinematics of finite strain elasticity. Next, we review essential conditions for incorporating physically or empirically inspired constraints, such as polyconvexity. Finally, we derive the most general form of the energy functional, introducing unknown functions that will be parametrized by appropriate hypothesis classes in the following section.

\subsection{Kinematics of Finite Deformation}
\label{sec:kinematics}
For completeness, we briefly review the kinematics of a continuum, which is the input of a path-independent elastic energy functional. 
Recall that the motion of a material point at the reference configuration can be described by the vector field $\vec{\phi}(\vec{X}): \mathbb{R}^3 \to \mathbb{R}^3$ which moves points $\vec{X} = X_I \vec{E}_I$ in the reference configuration $\Omega_0$ (i.e.,  $X \in \Omega_{0}$) to locations $\vec{x}=x_i \vec{e}_i$ in the current configuration $\Omega$. The deformation gradient tensor $\tensor{F}$, the primary measure of deformation, is the tangent operator of the motion $\vec{\phi}$, i.e., 
\begin{equation}
    \tensor{F} = \Grad \vec{\phi}; \quad F_{iJ} = \parder{\vec{\phi}_i}{\vec{X}_J}.
\end{equation}
The stretch vector $\vec{\lambda}^N$ along the direction of the unit vector $\vec{N}$ at $\tensor{X}\in \Omega_0$ is defined as, 
\begin{equation}
    \vec{\lambda}^N = \tensor{F} \vec{N}; \quad \lambda^N_i = F_{iJ} N_J.
\end{equation}
%
For practical reasons, one may prefer to use the right Cauchy-Green tensor $\tensor{C}$ as the deformation measure,
\begin{equation}
    \tensor{C} = \tensor{F}^T\tensor{F}; \quad C_{IJ} = F_{Ii} F_{iJ}.
\end{equation}
The right Cauchy-Green tensor is symmetric and positive-definite which is more favorable for numerical calculations. Moreover, it is fully described with respect to the reference coordinate system which may ease analytical derivations by avoiding conversion between spatial and material coordinate systems.
The first three principal invariants of the right Cauchy-Green tensor are calculated as follows,
\begin{align}
    &I_1 = \tr(C) = C_{II},\\
    &I_2 = \tr(\text{adj}( \tensor{C})) = \frac{1}{2} 
    \left(
        I_1^2 - \tr(\tensor{C}^2)
    \right) = 
    \frac{1}{2} \left(C_{II}C_{JJ} - C_{IJ}C_{IJ}
    \right),\\
    &I_3 = \det(\tensor{C}) = J^2,
\end{align}
where $J$ is the Jacobian of the deformation gradient, i.e., $J = \det(\tensor{F})$, and $\tr(\cdot)$ and $\det(\cdot)$ are trace and determinant operators, respectively. In these relations, adjugate operator is defined as $\text{adj}(\tensor{C}) = \text{cof}(\tensor{C})^T =\det(\tensor{C}) \tensor{C}^{-1}$, where $\text{cof}(\cdot)$ is the cofactor operator. The importance of the tensor representation based on its invariants will be clarified later. For isotropic hyperelastic materials, these three invariants are sufficient to predict the elastic stored energy and the corresponding stress measure due to coaxiality.

\subsection{Physics Constraints for Isotropic Elastic Materials}
Presumably, one may, for instance, develop constitutive theories by establishing relations between the deformation gradient and the first Piola–Kirchhoff stress $\tensor{P}(\vec{F})$. However, caution must be exercised to avoid violating 
physical principles such as thermodynamic consistency \citep{truesdell2004non}. 
Here, our focus is on the modeling of Green-elastic (hyperelastic) materials that postulate the existence of the Helmholtz free energy. 

\subsubsection{Thermodynamic Consistency for Green-elastic Materials}
Assuming that a material produces no entropy locally \citep{truesdell1992first}, then the material is perfectly elastic. 
In this case, the second law of thermodynamics, which requires non-negative internal dissipation, is fulfilled by the existence of the Helmholtz free energy, i.e., 
\begin{equation}
    \mathscr{D} = \tensor{P}:\dot{\tensor{F}} - \dot{W}(\tensor{F}) = \tensor{P}:\dot{\tensor{F}} - \parder{W(\tensor{F})}{\tensor{F}}:\dot{\tensor{F}} \ge 0,
\end{equation}
where $W$ is the Helmholtz free energy, and by definition, since the dissipation
is always zero for perfectly elastic materials,   we have, 
\begin{equation}
    \tensor{P} = \parder{W(\tensor{F})}{\tensor{F}}.
\end{equation}

\subsubsection{Objectivity and Frame Indifference}
The free energy functional $W$ must be invariant with respect to any rigid rotation of the reference coordinate system. This is equivalent to saying,
\begin{equation}
    W(\tensor{F}) = W(\tensor{Q} \tensor{F}) \quad \forall \tensor{Q} \in \text{SO(3)},
\end{equation}
where $\tensor{Q}$ is any arbitrary rotation tensor belonging to the special orthogonal group SO(3), i.e.., $\tensor{Q}^T\tensor{Q} = \tensor{I}$. This requirement can be satisfied if one defines the strain energy functional solely based on the right Cauchy-Green deformation tensor, i.e., $\bar{W}(\tensor{C}) = W(\tensor{F})$, which leads to the following relation,
\begin{equation}
    \exists \bar{W}(\tensor{C}) \ \text{s.t.} \  \tensor{P} = \parder{\bar{W}(\tensor{C})}{\tensor{F}} = 2 \tensor{F} \parder{\bar{W}(\tensor{C})}{\tensor{C}}
    \Longrightarrow \text{frame indifference}.
\end{equation}

\subsubsection{Isotropy Condition}
An isotropic material exhibits the same strain-stress response under a symmetry transformation, i.e., 
\begin{equation}
   W(\tensor{F}) = W( \tensor{F} \tensor{Q}^T) \quad \forall \tensor{Q} \in \text{SO(3)}.
\end{equation}
For the free energy written in terms of $\tensor{C}$, the isotropy of the constitutive responses implies that,
\begin{align}
    \bar{W}(\tensor{C}) = \bar{W}(\tensor{Q}\tensor{C}\tensor{Q}^T) \quad \forall \tensor{Q} \in \text{SO(3)}
    \Longrightarrow \text{isotropic material}.
\end{align}
From the representation theorem for invariants \citep{gurtin1982introduction,holzapfel2002nonlinear} one may show that this constraint is satisfied if the free energy is expressed as a function of only the principal invariants, i.e., 
\begin{align}
    &
    \exists \ \psi(I_1(\tensor{C}), I_2(\tensor{C}), I_3(\tensor{C}))
    \ \text{s.t.} \
    \tensor{S} = 2 \parder{\psi(I_1, I_2, I_3)}{\tensor{C}} \Longrightarrow \text{Isotropic Material},\\
    &\parder{\psi(I_1, I_2, I_3)}{\tensor{C}} = 
    \left(
        \parder{\psi}{I_1} + I_1 \parder{\psi}{I_2}
    \right) \tensor{I}
    - \parder{\psi}{I_2} \tensor{C}
    + I_3 \parder{\psi}{I_3} \tensor{C}^{-1},
    \label{eq:strs-general}
\end{align}
where $\tensor{S}$ is the second Piola–Kirchhoff stress and $\tensor{P} = \textbf{F}\tensor{S}$.

\subsubsection{Incompressibility Condition}
\alert{In this paper,  we limit our focus to deducing the mathematical expression of  elastic stored energy functionals 
for incompressible materials.  A Material is considered incompressible when it only deforms in an isochoric manner,  i.e.,  the 
$\det(\tensor{F}) = J = 1$.  For practical purposes,  the constitutive responses of many solids that exhibit significant isochoric deformation with negligible volumetric deformation (e.g., rubber), as well as liquids in room temperature (e.g., water), are idealized as incompressible \citep{ogden1997non,  boyce2000constitutive,  holzapfel2002nonlinear}.  
In these cases,  one may introduce a scalar variable $p$ (hydrostatic pressure) that serves as a Lagrange multiplier to account for the energy required to maintain this incompressibility constraint. } 
Under this condition,  the second Piola–Kirchhoff stress can be written as, 
\begin{align}
	\label{eq:incomp-energy}
    &\psi = \psi^{\text{uc}}(I_1, I_2) - \frac{1}{2} p U(I_3) \Longrightarrow \text{incompressible \& isotropic material},
    \\ \label{eq:incomp}
    &\tensor{S} = 
    2 \parder{\psi^{\text{uc}}(I_1, I_2)}{\tensor{C}}
    -
    p \parder{U}{\tensor{C}},
    \\
    & \parder{\psi^{\text{uc}}(I_1, I_2)}{\tensor{C}} = 
    \left(
        \parder{\psi^{\text{uc}}}{I_1} + I_1 \parder{\psi^{\text{uc}}}{I_2}
    \right)\tensor{I}
    - \parder{\psi^{\text{uc}}}{I_2} \tensor{C},
    \label{eq:uc-stress-incomp}
\end{align}
where $U(I_3)$ is the energy contribution to penalize the incompressibility constraint,  i.e.,  $|I_3| \to 1$.  

There are various expressions for $U(I_3)$ such as $U(I_3) = I_3 - 1$ and $(J-1)^2 / 2$ to enforce incompressibility \citep{hartmann2003polyconvexity}.  In our case,  we use  $U(I_3) = I_3 - 1$ such that $\partial U/ \partial \tensor{C} = I_3 \tensor{C}^{-1}$ \citep{holzapfel2002nonlinear}.

\subsubsection{Solution Existence and Uniqueness: Polyconvexity Condition}
In this section, we will show that the combination of convex functions of principal invariants is a subclass of polyconvex functions with respect to the deformation gradient, which leads to the existence of solutions for the elasticity boundary value problem.

Let us consider the functional $I(\vec{u})$ defined below,
\begin{equation}
    I(\vec{u}) = 
    \int_{\Omega_0} W\left(\Grad{\vec{u}}\right) d\Omega,
\end{equation}
where $\vec{u}(\vec{X}):\Omega_0\to \mathbb{R}^3$ is the displacement vector field defined over the open set $\Omega_0 \subset \mathbb{R}^3$. The stationary points of this functional satisfy the equilibrium equations of nonlinear elasticity for a homogeneous body under zero body forces. However, an arbitrary free energy functional may not guarantee the existence of minimizers. Convexity of the free energy with respect to the deformation gradient guarantees this existence and uniqueness \citep{hill1957uniqueness}. However, the uniqueness of the solution is too restrictive and not physical in bifurcation scenarios such as buckling. A less restrictive condition is polyconvexity,  which is a sufficient condition for the global existence of the solution \citep{marsden1994mathematical}.

\definition{Convexity:} A function $f(\vec{x}):\mathbb{D} \to \mathbb{R}$ is called convex with respect to its input argument when $f(\lambda \vec{x} + (1- \lambda) \vec{y}) \le \lambda f(\vec{x}) + (1-\lambda)f(\vec{y})$ for any arbitrary $\vec{x}, \vec{y} \in \mathbb{D}$ and $0 < \lambda < 1$.  The convexity condition becomes strict when the equality part becomes inadmissible. Here, $\mathbb{D}$ resembles the function domain, i.e., $\mathbb{D} = \text{dom}(f)$, which is, in this paper, a subset of real-valued vectors or tensors of rank two.

\definition{Rank-one convexity and ellipticity:}
A \alert{twice differentiable} free energy $W(\tensor{F}) = \psi(\tensor{C})$ leads to an elliptic system iff the Legendre-Hadamard or rank-one condition holds \citep{ball1976convexity,ball1980strict,kuhl2006illustration}:
\begin{equation}
    (\vec{M} \otimes \vec{m}): \frac{\partial^2 W}{\partial \tensor{F}\partial \tensor{F}}: (\vec{M} \otimes \vec{m}) \ge 0 \quad \forall \vec{M}, \vec{m} \in \mathbb{R}^3,
\end{equation}
where $\vec{M}$ and $\vec{m}$ are arbitrary vectors in the reference and current coordinate systems, respectively.  The ellipticity (strongly elliptic) condition is satisfied if the inequality is strictly positive. Rank-one convexity is a sufficient condition for material stability and the well-posedness of the elasticity boundary value problem. Notice that the ellipticity requirement does not necessarily result in convexity of the strain energy functional, so multiple minimizers can be found.

\definition{Quasiconvexity:} Morrey's quasiconvexity condition reads,
\begin{equation}
    \int_{D} 
    W\left(
    \tensor{F}_0 + \Grad \vec{v}(\vec{x}) 
    \right)
    d\vec{x} \ge
    |D| W(\tensor{F}_0),
\end{equation}
for all bounded open sets $D \in \mathbb{R}^3$ and all vector fields $\vec{v}(\vec{x})$. The quasiconvexity condition along with certain growth and continuity conditions result in the existence of minimizers \citep{morrey1952quasi}.

\definition{Polyconvexity:} The free energy $\psi(\tensor{F})$ is polyconvex if and only if (iff) it is convex with respect to $(\tensor{F}, \text{adj} \ \tensor{F}, \det \tensor{F})$:
\begin{equation}
    \exists \ \text{convex} \ f \ \text{s.t.} \ \psi(\tensor{F}) = f(\tensor{F}, \text{adj} \ \tensor{F}, \det \tensor{F}), \quad \forall \tensor{F} \in \mathbb{R}^{3\times3}
    \Longleftrightarrow
    \text{polyconvex} \ \psi(\tensor{F}).
\end{equation}
Polyconvexity results in quasiconvexity but not necessarily convexity. 

Notice that all rank-one convexity, quasiconvexity, and polyconvexity conditions are sufficient conditions for the existence of minimizers. However, working with polyconvexity is preferable since, compared to the others, it is more straightforward to be applied as it has locality without any dependence on arbitrary objects (i.e., $\vec{m}$ and $\vec{M}$) other than the deformation gradients. 
\alert{
In summary, the sequence of conditions presented below illustrates a progression where a condition on the right results from a condition on the left, with arrows indicating these implications. For example, quasiconvexity leads to rank-one convexity:
}
\begin{equation}
    \fbox{\text{polyconvexity}} \rightarrow 
    \text{quasiconvexity} \rightarrow
    \text{rank-one convexity}
    \rightarrow
    \text{existence of minimizers.}
\end{equation}

\begin{lemma}\label{lemma:addetivNeff}
A subclass of polyconvex functions can be constructed by convex functions $f_1(\tensor{F})$, $f_2(\text{adj}\tensor{F})$, $f_3(\det \tensor{F})$ in the additive fashion $\psi = f_1(\tensor{F}) + f_2(\text{adj}\tensor{F}) + f_3(\det \tensor{F})$.
\end{lemma}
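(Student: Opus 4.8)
The plan is to exhibit, for any $\psi$ of the stated additive form, an explicit convex function $f$ of the triple $(\tensor{F},\text{adj}\,\tensor{F},\det\tensor{F})$ with $\psi(\tensor{F}) = f(\tensor{F},\text{adj}\,\tensor{F},\det\tensor{F})$, which is exactly the defining property of polyconvexity recalled above; the claimed inclusion is then immediate, and I would separately argue the inclusion is proper to justify the word ``subclass.''

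First I would set up the product space $V := \mathbb{R}^{3\times3}\times\mathbb{R}^{3\times3}\times\mathbb{R}$ with its natural (block) linear structure, and define $f:V\to\mathbb{R}$ by $f(\tensor{A},\tensor{B},c) := f_1(\tensor{A}) + f_2(\tensor{B}) + f_3(c)$ (on the convex subset $\mathrm{dom}(f_1)\times\mathrm{dom}(f_2)\times\mathrm{dom}(f_3)$ if the $f_i$ are only defined on subdomains, e.g. $f_3$ on the positive reals). The key observation is that each summand, regarded as a function on all of $V$, is the composition of one of the given convex functions with a \emph{linear} coordinate projection: $f_1\circ\pi_1$ with $\pi_1(\tensor{A},\tensor{B},c)=\tensor{A}$, and analogously for $\pi_2,\pi_3$. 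Since the composition of a convex function with a linear (indeed affine) map is convex, each of $f_1\circ\pi_1$, $f_2\circ\pi_2$, $f_3\circ\pi_3$ is convex on $V$, and since a finite sum of convex functions on a common convex domain is convex, $f$ itself is convex on $V$.

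Next I would substitute $(\tensor{A},\tensor{B},c)=(\tensor{F},\text{adj}\,\tensor{F},\det\tensor{F})$ to get $\psi(\tensor{F}) = f_1(\tensor{F}) + f_2(\text{adj}\,\tensor{F}) + f_3(\det\tensor{F}) = f(\tensor{F},\text{adj}\,\tensor{F},\det\tensor{F})$ with $f$ convex, which is precisely the definition of polyconvexity, so $\psi$ is polyconvex. To see the inclusion is strict, I would note that a generic convex $f$ on $V$ need not be additively separable in its three block arguments (it may contain cross terms coupling, say, $\tensor{F}$ and $\det\tensor{F}$), so the additive family is properly contained in the polyconvex family; a one-line explicit example of a polyconvex but non-additively-representable $\psi$ suffices.

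The only genuine subtlety is the step asserting that a function on $V$ that depends on just one block argument but is convex in that argument is convex \emph{jointly} on $V$: this is exactly the composition-with-a-linear-map lemma and must be invoked explicitly, since convexity in each argument separately is strictly weaker than joint convexity in general — here it is the true convexity of each $f_i$ pulled back through the linear projection $\pi_i$ that does the work. Everything else is routine ``sum of convex is convex'' bookkeeping, so I expect no further obstacle.
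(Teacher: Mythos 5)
Your proof is correct and is precisely the standard argument: the paper itself gives no details, deferring to \citet{schroder2003invariant}, and the construction you spell out --- the separable function $f(\tensor{A},\tensor{B},c)=f_1(\tensor{A})+f_2(\tensor{B})+f_3(c)$ on the product space, shown to be jointly convex because each summand is a convex function pulled back through a linear coordinate projection --- is exactly the argument contained in that reference. You also correctly isolate the one genuine subtlety (joint convexity on the product space versus convexity in each block separately), so there is nothing to add.
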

\begin{proof}
For brevity, we refer readers to \citet{schroder2003invariant} for details.
\end{proof}

\begin{lemma}\label{lemma:composConvx}
If a tensor-valued function $h(\tensor{A}):\mathbb{R}^{3\times3} \to \mathbb{R}$ and a scalar-valued function $g(a):\mathbb{R}\to\mathbb{R}$ are convex \alert{and $g(\cdot)$ is a non-decreasing function, } then $g \circ h$ is convex.
\end{lemma}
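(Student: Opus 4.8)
The plan is to verify the defining inequality of convexity for the composition $g \circ h$ directly from the definitions, using the hypotheses on $h$ and $g$ in sequence. Fix arbitrary $\tensor{A}, \tensor{B} \in \mathbb{R}^{3\times 3}$ and $\lambda \in (0,1)$, and set $\tensor{A}_\lambda = \lambda \tensor{A} + (1-\lambda)\tensor{B}$. The goal is to show $g(h(\tensor{A}_\lambda)) \le \lambda\, g(h(\tensor{A})) + (1-\lambda)\, g(h(\tensor{B}))$.

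First I would use convexity of $h$ to obtain $h(\tensor{A}_\lambda) \le \lambda\, h(\tensor{A}) + (1-\lambda)\, h(\tensor{B})$. Next I would invoke the fact that $g$ is non-decreasing: applying $g$ to both sides of this scalar inequality preserves its direction, so $g(h(\tensor{A}_\lambda)) \le g\bigl(\lambda\, h(\tensor{A}) + (1-\lambda)\, h(\tensor{B})\bigr)$. Finally I would apply convexity of $g$ to the right-hand side, treating $h(\tensor{A})$ and $h(\tensor{B})$ as the two scalar arguments, which yields $g\bigl(\lambda\, h(\tensor{A}) + (1-\lambda)\, h(\tensor{B})\bigr) \le \lambda\, g(h(\tensor{A})) + (1-\lambda)\, g(h(\tensor{B}))$. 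Chaining the two inequalities completes the argument, and since $\tensor{A}, \tensor{B}, \lambda$ were arbitrary, $g \circ h$ is convex on $\mathbb{R}^{3\times 3}$.

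The only subtle point — and the step I would be most careful about — is the monotonicity requirement on $g$: without it, applying $g$ to the inequality $h(\tensor{A}_\lambda) \le \lambda h(\tensor{A}) + (1-\lambda) h(\tensor{B})$ need not preserve the inequality, and the chain breaks. It is worth remarking explicitly why the hypothesis cannot be dropped (a decreasing convex $g$ composed with an affine, hence convex, $h$ can easily fail to be convex), which also clarifies why the paper states it. Beyond that, the proof is entirely routine; no growth, differentiability, or domain-geometry considerations enter, since $h$ is assumed defined on all of $\mathbb{R}^{3\times 3}$ and real-valued. If one wanted strictness, one would additionally need either $g$ strictly increasing with $h$ strictly convex, or $g$ strictly convex with $h$ affine avoided, but the statement as given only claims plain convexity, so I would not pursue that.
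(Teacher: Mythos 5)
Your proof is correct and follows exactly the same route as the paper's: apply convexity of $h$, use the non-decreasing property of $g$ to preserve the inequality, then apply convexity of $g$. Your added remark on why the monotonicity hypothesis cannot be dropped is a nice touch but does not change the argument.
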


\begin{proof}
For two arbitrary tensors $\tensor{A}, \tensor{B} \in \mathbb{R}^{3\times 3}$
\begin{align}
    (g\circ h)(\lambda \tensor{A} + (1-\lambda) \tensor{B}) &=
    g\left(
        h(\lambda \tensor{A} + (1-\lambda) \tensor{B})
    \right)
    \\
    &\le
    g\left(
        \lambda h(\tensor{A}) + (1 - \lambda) h(\tensor{B})
    \right) \quad \text{using convexity of} \ h \text{ and } g \text{ non-decreasing}
    \\
    &\le
    \lambda g(h(\tensor{A})) + (1 - \lambda) g(h(\tensor{B}))  \quad \text{using convexity of} \ g.
\end{align}
\end{proof}

\begin{lemma}\label{lemma:addetivInvars}
If $\psi_1(I_1)$, $\psi_2(I_2)$, and $\psi_3(I_3)$ are convex \alert{and non-decreasing} functions, then $\psi = \psi_1(I_1) + \psi_2(I_2) + \psi_3(I_3)$ is polyconvex in $\tensor{F}$.
\end{lemma}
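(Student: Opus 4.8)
The plan is to reduce the claim to Lemma~\ref{lemma:addetivNeff} by rewriting each of the three summands $\psi_1(I_1)$, $\psi_2(I_2)$, $\psi_3(I_3)$ as a convex function of, respectively, $\tensor{F}$, $\text{adj}\,\tensor{F}$, and $\det\tensor{F}$. The first step is to record the algebraic identities $I_1 = \tr(\tensor{C}) = \tensor{F}:\tensor{F}$, $I_2 = \tr(\text{adj}\,\tensor{C}) = (\text{adj}\,\tensor{F}):(\text{adj}\,\tensor{F})$, and $I_3 = \det\tensor{C} = (\det\tensor{F})^2$. The identities for $I_1$ and $I_3$ are immediate from $\tensor{C} = \tensor{F}^T\tensor{F}$; the one for $I_2$ uses the multiplicativity of the cofactor, $\text{cof}(\tensor{A}\tensor{B}) = \text{cof}(\tensor{A})\,\text{cof}(\tensor{B})$, which gives $\text{adj}\,\tensor{C} = \text{adj}(\tensor{F}^T\tensor{F}) = (\text{adj}\,\tensor{F})(\text{adj}\,\tensor{F})^T$, whose trace is the squared Frobenius norm of $\text{adj}\,\tensor{F}$. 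Consequently $I_1$ is a convex quadratic form in the entries of $\tensor{F}$ (a sum of squares of components, so its Hessian is positive semidefinite), $I_2$ is the same convex quadratic form in the entries of $\text{adj}\,\tensor{F}$, and $I_3$ is the globally convex scalar map $c\mapsto c^2$ evaluated at $c=\det\tensor{F}$.

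Next I would compose these convex inner maps with the prescribed convex, non-decreasing outer functions. Setting $f_1(\tensor{F}) := \psi_1(\tensor{F}:\tensor{F})$, $f_2(\tensor{G}) := \psi_2(\tensor{G}:\tensor{G})$ with $\tensor{G}$ ranging over the slot occupied by $\text{adj}\,\tensor{F}$, and $f_3(c) := \psi_3(c^2)$, each $f_i$ is a composition of the form $g\circ h$ with $h$ convex (from the previous step) and $g\in\{\psi_1,\psi_2,\psi_3\}$ convex and non-decreasing; hence Lemma~\ref{lemma:composConvx} (and its obvious scalar analogue for $f_3$) yields that $f_1$, $f_2$, $f_3$ are convex in their respective arguments. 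It then only remains to note that $\psi(\tensor{F}) = f_1(\tensor{F}) + f_2(\text{adj}\,\tensor{F}) + f_3(\det\tensor{F})$, which is precisely the additive structure of Lemma~\ref{lemma:addetivNeff}, so $\psi$ is polyconvex in $\tensor{F}$.

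The step I expect to need the most care is the $I_3$ term. The map $\tensor{F}\mapsto\det\tensor{F}$ is neither convex nor concave on $\mathbb{R}^{3\times3}$, so one must not try to compose $\psi_3$ with $\det$ directly; the argument goes through only because polyconvexity treats $\det\tensor{F}$ as an \emph{independent} scalar variable, so that the inner map is $c\mapsto c^2$ (globally convex) and the monotonicity of $\psi_3$ is exactly the hypothesis that makes $c\mapsto\psi_3(c^2)$ convex --- this is where the non-decreasing assumption is indispensable (the same remark applies to $\psi_1,\psi_2$ through Lemma~\ref{lemma:composConvx}). A secondary point worth stating explicitly is that a finite sum of convex functions, each depending only on its own block of the product space $\mathbb{R}^{3\times3}\times\mathbb{R}^{3\times3}\times\mathbb{R}$, is jointly convex there; this is the content underlying Lemma~\ref{lemma:addetivNeff} and is what lets the three separately convex pieces combine into a single convex $f(\tensor{F},\text{adj}\,\tensor{F},\det\tensor{F})$.
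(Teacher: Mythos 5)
Your proposal is correct and follows essentially the same route as the paper: establish that $I_1$, $I_2$, $I_3$ are convex as functions of $\tensor{F}$, $\text{adj}\,\tensor{F}$, $\det\tensor{F}$ respectively (the content of Appendix~\ref{appx:convx}), compose with the convex non-decreasing $\psi_i$ via Lemma~\ref{lemma:composConvx}, and conclude by the additive structure of Lemma~\ref{lemma:addetivNeff}. Your explicit remarks on why $\det\tensor{F}$ must be treated as an independent scalar slot and where the non-decreasing hypothesis enters are accurate and make the argument more self-contained than the paper's terse version.
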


\begin{proof}
$I_1(\tensor{F})$, $I_2(\text{adj} \tensor{F})$, and $I_3(\det \tensor{F})$ are convex functions \alert{(see Appendix \ref{appx:convx} for details)}. Based on the composition lemma $\psi_1(\tensor{F})$, $\psi_2(\text{adj} \tensor{F})$, and $\psi_3(\det \tensor{F})$ are convex as well. Hence, $\psi$ is polyconvex in $\tensor{F}$.
\end{proof}

In this work, the polyconvexity constraint is guaranteed by choosing parameterized functions for $\psi_1(I_1)$ and $\psi_2(I_2)$ which are convex and non-negative; this choice will be clarified later. We use an additive structure in the unconstrained part of the free energy in Eq.~\eqref{eq:incomp-energy}, i.e., $\psi^{\text{uc}}(I_1, I_2) = \psi_1(I_1) + \psi_2(I_2)$.

\subsubsection{Coercivity (Growth) Conditions}
As mentioned earlier, certain coercivity (growth) conditions along with polyconvexity result in the existence of global minimizers. These conditions ensure that infinite strains result in unbounded stresses. The growth condition is satisfied if,
\begin{equation}
    \psi(\tensor{F}) \ge 
    \alpha \left(
    ||\tensor{F}||^p + ||\text{adj} \tensor{F}||^q + (\det \tensor{F})^r
    \right) + \beta,
\end{equation}
for $\alpha>0$, $\beta>0$, $p \ge 2$, $q \ge \frac{p}{p-1}$,  and $r > 1$. The matrix norm is $||\tensor{F}||^2 = \tr(\tensor{F}^T \tensor{F})$. For incompressible materials, this inequality can be simplified to, 
%
\begin{equation}
    \psi(\tensor{F}) \ge
    \alpha (I_1^{p} + I_2^q) + \alpha + \beta.
    \label{eq:growth}
\end{equation}

In this work, we do not explicitly enforce the coercivity condition, however, we will check the admissibility of the discovered equations according to the coercivity condition.

\remark{
Compressing material towards zero volume should require an infinite amount of energy. Thus one needs to enforce $\psi \to \infty$ when $J \to 0^+$ if the material is compressible.
}

\subsubsection{Stress Free and Positivity conditions}
 The free energy functional should remain non-negative for all possible deformations. 
For practical purposes,  it is often desirable to assume that the initial reference configuration does not exhibit any residual stress 
when  $\tensor{C} = \tensor{F} = \tensor{I}$ (cf. \citet{hoger1986determination}).  These two conditions can be expressed as, 
\begin{align}
    &W(\tensor{F}) \ge 0,\\
    &W(\tensor{F})|_{\tensor{F} = \tensor{I}} =0, \ \tensor{S}(\tensor{F})|_{\tensor{F} = \tensor{I}} = \bold{0}.
\end{align}
The non-negative condition can be satisfied by selecting parameterizations that lead to non-negative free energy functionals.  These parameterizations will be elaborated further in latter sections. 

\subsubsection{Final Form of Hyperelastic Energy Functional for Model Discovery}
\alert{To fulfill the physical constraints and assumptions we listed in the previous sections,  we limit the mathematical expression of the learned model to be in the following form, }
\begin{equation}
    \psi^{\text{final}}(I_1, I_2) = 
    \underbrace{\psi_1(I_1) + \psi_2(I_2)}_{\psi(I_1, I_2)} - \underbrace{\frac{1}{2}p(I_3 - 1)}_{\text{incompressibility}} -
    \underbrace{\alpha_0 (I_3 - 1)}_{\psi_{P_0} \text{zero stress}}
    - \underbrace{(\psi_1(3) + \psi_2(3))}_{\psi_0 \ \text{zero energy}}\label{eq:final}.
\end{equation}
\alert{In particular,  we hypothesize that the 
additive decomposition of the unconstrained energy functional ,  $\psi(I_1, I_2) = \psi_1(I_1) + \psi_2(I_2)$ is valid. 
As discussed in \citet{agarwal2021neural},  restricting the energy functional to take the form of a linear combination of input variables may reduce the expressivity.  This assumption,  however,  can also reduce the difficulty of the symbolic regression by reducing the dimensionality of the learned functions.  This trade-off will be further demonstrated in the numerical examples.  By restricting the learned model to take the form of Equation \eqref{eq:final},  one may enforce the zero-stress condition by calculating the parameter $\alpha_0$ such that the four terms in Equation \eqref{eq:final} canceling out each other at the reference configuration, i.e., }
\begin{align}
    &\alpha_0 = \frac{d \psi_1}{d I_1}|_{I_1 = 3} + 2 \frac{d \psi_2}{d I_2}|_{I_2 = 3}.
\end{align}
One can straightforwardly show that the constructed strain energy satisfies the zero-stress condition at the undeformed state, i.e., when $I_1= I_2 = 3$, $I_3 = 1$, and $p=0$.

Adopting a similar cancellation/counterbalancing strategy used in  \citet{as2022mechanics} and \citet{
chen2022polyconvex},  we formulate the energy functional such that a counterbalance term is added to ensure 
 that the stress-free condition corresponds to the undeformed reference configuration by construction.  This is done as follows. 
Consider an energy functional $\psi(I_1, I_2)$ that does not necessarily meet the stress-free condition in its undeformed state.  According to Equation \eqref{eq:strs-general},  the non-zero stress state is as follows,
\begin{equation}
\left(
\parder{\psi}{I_1}(I_1=3, I_2=3) + 2\parder{\psi}{I_2}(I_1=3, I_2=3) 
\right)
\tensor{I}.
\end{equation}
By introducing an extra energy component to the energy functional, which operates independently of $\psi(I_1, I_2)$, we can generate an equivalent stress value with an opposite sign. This approach effectively neutralizes the non-zero stress initially observed. 
For the incompressible scenario, an effective approach is to formulate the additional energy contribution solely as a function of $I_3$, which remains unaffected by the deformation state (namely, $I_1$ and $I_2$).  A practical example of this is setting $\psi_{{P_0}} = \alpha_0 (I_3 - 1)$, where the parameter $\alpha_0$ is adjustable.   This adjustment ensures that the total energy functional, represented by $\psi(I_1, I_2) - \psi_{{P_0}}$,  equals zero in the undeformed state:
\begin{align}
    &\alpha_0 = \parder{\psi}{I_1}(I_1 = 3, I_2 = 3)
+ 
2 \parder{\psi}{I_2}(I_1 = 3, I_2 = 3).
\label{eq:zero-stress-general}
\end{align}

We will formulate the supervised learning problem such that the resultant expression is compatible with the form in Eq.~\eqref{eq:final}. As we will discuss in next section, this setting limits the expression trees included in the combinatorial optimization and hence reduces the difficulty of the SR problem. 

\section{Interpretable Model Discovery Compatible with Physics Constraints}
\label{sec:modelDisc}
Deriving a physics-constrained model consistent with the proposed form in Eq.~\eqref{eq:final} directly through a single SR algorithm remains challenging for the following reasons. 
Firstly, it is challenging to straightforwardly restrict each expression tree to yield a convex function over the entire data range and beyond. Ensuring convexity can be complex and may lead to sub-optimal results. Secondly, as illustrated in Figure \ref{fig:framework}, the stress objective function requires access to the gradients of the energy functional. This requirement can significantly increase the computational time of SR, especially when dealing with deep expression trees with many terms.
Finally, there is a high likelihood of encountering expression trees with unstable functions, thus further complicating the SR training process. 

To overcome these well-known challenge for SR, we modify our recently developed interpretable data-driven approach for uncovering yield surfaces \citep{bahmani2023discovering} such that the learning process now takes into account physics constraints. Our divide (step 1 in Figure \ref{fig:framework}) and conquer (step 2 in Figure \ref{fig:framework}) strategy still aims to merge the scalability of training neural-network models with the interpretability offered by SR. However, instead of deducing a model consistent with just the data, we establish a theoretical link between this computational approach and the classical discovery of a polyconvex energy functional for isotropic hyperelasticity, which features an additive structure, as stated in Lemma \ref{lemma:addetivInvars}.

\begin{figure}[h!]
  \centering
  \includegraphics[width=0.8\textwidth]{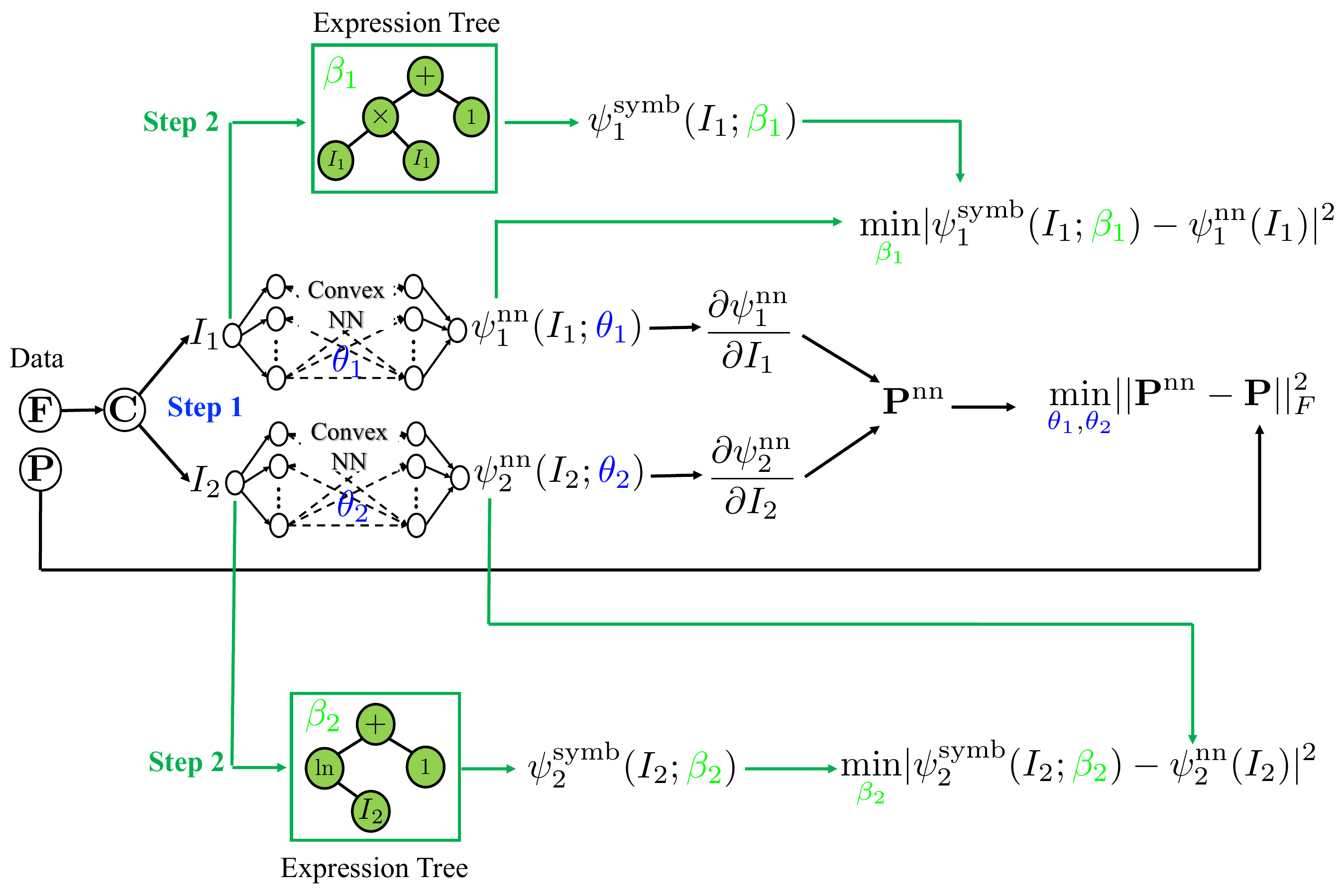}
  \caption{Proposed interpretable hyperelastic model discovery via a two-step approach. In \textcolor{blue}{step 1}, a PNAM is trained using a \textcolor{blue}{scalable} gradient-based optimizer with {multivariate} hyperelasticity data. In \textcolor{green}{step 2}, an evolutionary-based optimizer finds \textcolor{green}{interpretable} symbolic equations by conducting SR over {univariate} functions in {parallel}. The data are assumed to be available in the form strain-stress pair $(\tensor{F}, \tensor{P})$. Trainable neural network parameters are shown in blue, i.e., $\vec{\theta}_1, \vec{\theta}_2$. The unknown structure of the expression trees are shown in green, i.e., $\vec{\beta}_1, \vec{\beta}_2$, which are found by an evolutionary-based algorithm.
  }
  \label{fig:framework}
\end{figure}

\subsection{Polyconvex Neural Additive Model}
\label{sec:ICNN}

Here, we introduce the polyconvex neural additive model (PNAM). Each shape function $\psi_i(I_i)$ in the PNAM is parameterized by a separate neural network which must be convex, positive, and non-decreasing. A particular class of neural networks known as input-convex neural networks (ICNNs) designed by \citep{amos2017input} fulfills all these properties.

The forward operations $\psi^{ICNN}(I; \vec{\theta})$ in an ICNN are as follows,
\begin{align}
    &\vec{h}_1 = g_0(I \vec{v}^0 + \vec{b}^0); \quad  \vec{v}^0, \vec{b}^0, \vec{h}^1 \in \mathbb{R}^{l_1}, 
    \label{eq:icnnStep1}
    \\
    &\vec{h}_{j+1} = g_j(\tensor{W}^j \vec{h}_j + I \vec{v}^j + \vec{b}^j);\quad  \vec{v}^j, \vec{b}^j, \vec{h}_j \in \mathbb{R}^{l_{j+1}}, \tensor{W}^j\in {\mathbb{R}^{+}}^{l_j \times l_{j+1}}, 1\le j < L,
    \label{eq:icnnStep2}
    \\
    &\psi^{\text{ICNN}} = h_L,
\end{align}
where, as shown in Figure \ref{fig:icnn},  L is the number of hidden layers, $g_j$ are activation functions for each layer, $h_j$ are the hidden representation of data at each layer, and $\tensor{W}^j$, $\vec{b}^j$, and $\vec{v}^j$ are unknown, trainable parameters of the neural network. 

\begin{figure}[h]
  \centering
  \includegraphics[width=0.5\textwidth]{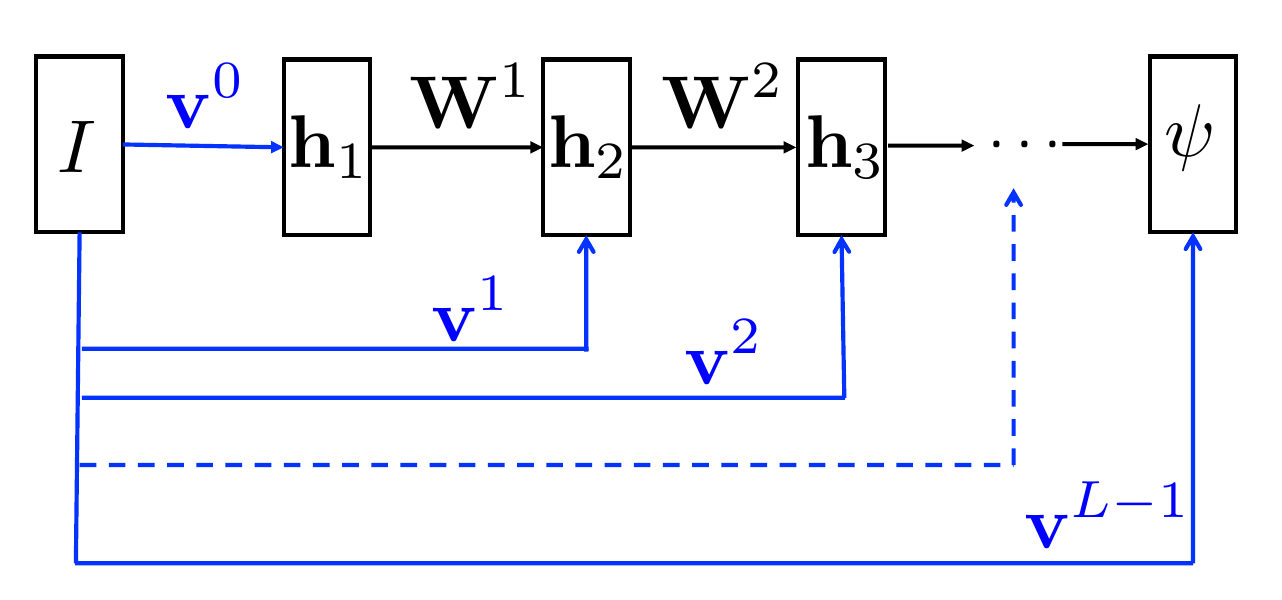}
  \caption{Input-convex neural network which is guaranteed to be convex with respect to the input $I$, non-decreasing, and non-negative by construction.}
  \label{fig:icnn}
\end{figure}

\begin{proposition}
 The function $\psi^{\text{ICNN}}(I, \vec{\theta})$ is convex with respect to the input $I$ if all $\tensor{W}^j$ are non-negative, and all activation functions $g_j$ are convex.
\end{proposition}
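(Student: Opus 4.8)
The plan is to prove the claim by induction on the layer index $j$, establishing simultaneously that each component of the hidden representation $\vec{h}_j$ is a convex and non-decreasing function of the scalar input $I$. The two properties must be carried together through the induction because preserving convexity under the affine-then-activation map $\vec{h}_{j+1} = g_j(\tensor{W}^j \vec{h}_j + I\vec{v}^j + \vec{b}^j)$ requires knowing not only that the entries of $\vec{h}_j$ are convex, but also that they are monotone, so that composing with the (convex, non-decreasing) activation $g_j$ via Lemma~\ref{lemma:composConvx} yields a convex result; and the non-decreasing property of $\vec{h}_{j+1}$ must be re-established at each step to feed the next one.

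First I would treat the base case: $\vec{h}_1 = g_0(I\vec{v}^0 + \vec{b}^0)$. Here each component is $g_0$ applied to an affine function of $I$. An affine function of a scalar is convex (indeed both convex and concave), and $g_0$ is convex; but to conclude $g_0$ of it is convex I need $g_0$ non-decreasing — which is part of the hypothesis on the ICNN activations — combined with Lemma~\ref{lemma:composConvx}. (Strictly, here one should note the affine map's sign: if the weight $v^0_k$ is negative the argument is non-increasing, but composition of a non-decreasing convex function with a convex function is still convex regardless, since the inner function being affine is convex. The monotonicity of $\vec{h}_1$ as a function of $I$, however, does depend on the sign of $v^0_k$ unless one restricts the $\vec{v}$'s or uses activations like the square of a ReLU; the paper's ICNN construction, following Amos et al., keeps this under control, and I would point to that.) Then, for the inductive step, assuming each entry of $\vec{h}_j$ is convex and non-decreasing in $I$: the map $I \mapsto \tensor{W}^j\vec{h}_j + I\vec{v}^j + \vec{b}^j$ has each component equal to a non-negative combination (since $\tensor{W}^j \ge 0$ entrywise) of convex non-decreasing functions plus an affine term, hence is convex and non-decreasing; applying the convex non-decreasing activation $g_j$ and invoking Lemma~\ref{lemma:composConvx} once more gives that each entry of $\vec{h}_{j+1}$ is convex, and monotonicity is preserved since a non-decreasing function of a non-decreasing function is non-decreasing. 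Taking $j = L$ and noting $\psi^{\text{ICNN}} = h_L$ (a scalar) completes the argument.

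The main obstacle is the bookkeeping around monotonicity and the role of the $\vec{v}^j$ (the "skip connections" feeding $I$ directly into every layer): these terms are affine, so they never break convexity, but they can break the non-decreasing property if some $v^j_k < 0$. The cleanest resolution — and the one I would adopt, consistent with the ICNN literature the paper cites — is to observe that for the \emph{convexity} conclusion alone, the non-decreasing property of the intermediate $\vec{h}_j$ is only needed so that $g_j$ of a non-negative-weighted sum of them stays convex; since $g_j$ is applied to something that is a sum of a convex function (from $\tensor{W}^j \vec{h}_j$, using that $\vec{h}_j$ is convex and $\tensor{W}^j \ge 0$) and an affine function ($I\vec{v}^j + \vec{b}^j$), the inner argument is convex, and $g_j$ convex non-decreasing then yields convexity by Lemma~\ref{lemma:composConvx} without needing monotonicity of $\vec{h}_j$ at all. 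So I would actually streamline the induction to carry only convexity of the entries of $\vec{h}_j$, using $\tensor{W}^j \ge 0$ and the convex-non-decreasing activations at each step, and mention the non-decreasing and non-negativity properties of the final $\psi^{\text{ICNN}}$ separately (these come from choosing non-negative, non-decreasing activations such as a shifted softplus or squared ReLU, which I would state as the design choice rather than re-derive).
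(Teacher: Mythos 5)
Your streamlined argument is exactly the paper's proof: an induction over layers in which the first pre-activation is affine in $I$ (Lemma~\ref{lemma:affineConvx}), each subsequent pre-activation is a non-negative combination of convex functions plus an affine term (Lemma~\ref{lemma:posSumConvx}, using $\tensor{W}^j \ge 0$), and each activation is composed via Lemma~\ref{lemma:composConvx}. Your side observation is also well taken --- the composition step genuinely requires the $g_j$ to be non-decreasing, a hypothesis absent from the proposition as stated but supplied by the paper's choice of softplus and softplus2 activations.
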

\begin{proof}
For the sake of completeness, we provide a straightforward proof. The first step in Eq.~\eqref{eq:icnnStep1} establishes the convexity of each component in the vector $\vec{h_1}$ (see Lemmas \ref{lemma:affineConvx} and \ref{lemma:composConvx}. ) To verify the proposition, we must demonstrate the convexity of each component of $\vec{h}_{j+1}$ with respect to $I$ for $j\ge 1$.  Equation \eqref{eq:icnnStep2} confirms this result, as each component of the resulting vector $\tensor{W}^j \vec{h}_j$ is convex (see Lemma \ref{lemma:posSumConvx}).
\end{proof}

\begin{lemma}\label{lemma:posSumConvx}
A weighted sum of convex functions $h_i(x)$ with non-negative wights $w_i$ results in a convex function in $x\in \mathbb{R}$.
\end{lemma}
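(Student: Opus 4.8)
The plan is to verify the defining inequality for convexity directly, since this lemma is a one-line consequence of two elementary facts: multiplying an inequality by a non-negative scalar preserves its direction, and adding inequalities preserves their direction. First I would fix the data precisely: a finite index set, convex functions $h_i$ sharing a common convex domain $\mathbb{D}\subseteq\mathbb{R}$, non-negative weights $w_i\ge 0$, and the candidate function $f(x)=\sum_i w_i h_i(x)$, which is well defined on $\mathbb{D}$.

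Next I would take arbitrary $x,y\in\mathbb{D}$ and $\lambda\in(0,1)$ and apply the definition of convexity given earlier in the excerpt to each $h_i$ separately, obtaining $h_i(\lambda x+(1-\lambda)y)\le \lambda h_i(x)+(1-\lambda)h_i(y)$. Multiplying the $i$-th inequality by $w_i\ge 0$ leaves the inequality sign unchanged, and summing over $i$ gives $f(\lambda x+(1-\lambda)y)\le \lambda f(x)+(1-\lambda)f(y)$, which is exactly the convexity of $f$. If a more pedestrian presentation is preferred, one can first treat the two-summand case $w_1h_1+w_2h_2$ and then obtain the general statement by an immediate induction on the number of terms; likewise, strict convexity of $f$ follows whenever at least one $h_i$ with $w_i>0$ is strictly convex.

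There is no real obstacle here; the single point deserving emphasis — and the only place the hypothesis is actually used — is the sign of the weights, since multiplying $h_i(\lambda x+(1-\lambda)y)\le \lambda h_i(x)+(1-\lambda)h_i(y)$ by $w_i$ preserves the inequality precisely because $w_i\ge 0$, and a negative weight would reverse it and destroy convexity. The same argument applies verbatim when the $h_i$ take a tensor argument, which is the form in which the result is invoked for the rows of $\tensor{W}^j\vec{h}_j$ in the ICNN layers.
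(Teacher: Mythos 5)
Your argument is correct and is essentially the same as the paper's: both proofs apply the definition of convexity to each $h_i$ at the point $\lambda x + (1-\lambda)y$, use the non-negativity of the weights $w_i$ to preserve the inequality under multiplication, and sum over $i$ (the paper writes the sum via repeated-index convention). The only differences are presentational — your explicit remarks on induction, strict convexity, and the tensor-argument case are harmless additions beyond what the paper records.
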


\begin{proof}
Let us define $g(x) = w_i h_i(x)$ with constants $w_i \in \mathbb{R}^{+}$ and convex functions $h_i$, for any arbitrary $x_1, x_2,  \text{and } \lambda$ we have,  with summation over $i$,
\begin{align}
	g(\lambda x_1 + (1-\lambda)x_2) &= 
	w_i h_i(\lambda x_1 + (1-\lambda)x_2)
	\\
	&\le 
	w_i
	\left[
		\lambda h_i(x_1) + (1-\lambda) h_i(x_2)
	\right]
	\quad \text{via convexity of } h_i \text{ and positivity of } w_i
	\\
	&=
		\lambda w_i h_i(x_1) + (1-\lambda) w_i h_i(x_2)
	\\
	&=
	\lambda g(x_1) + (1-\lambda) g(x_2).
\end{align}
\end{proof}

\remark{
In this study, we utilize the softplus and softplus2 activation functions, both of which are convex, smooth, \alert{and non-decreasing}. These functions are defined as follows,
\begin{align}
&\text{softplus}(x) = \ln(1 + \exp(x)),
\\
&\text{softplus2}(x) = \text{softplus}(x)^2.
\end{align}
Previous works \citep{klein2022polyconvex, as2022mechanics, kalina2023fe} demonstrate the application of these activation functions in the context of ICNN modeling of hyperelastic energy functionals.
}

\subsection{Objective Function For Neural Network Training}
Each shape function $\psi_i(I_i)$ in Eq.~\eqref{eq:final} is parametrized by a different ICNN $\psi^{ICNN}_i(I_i; \vec{\theta}^i)$ where each $\vec{\theta}^i$ concatenates all trainable parameters associated with the i-th ICNN. In this work, we assume we only have access to strain and stress measurements stored in a dataset $\mathcal{D} = \{\tensor{F}^k, \tensor{P}^k\}_{k=1}^{N_{\text{data}}}$ for $N_{\text{data}}$ number of measurements. Notice that in realistic experiments, access to values of the energy functional or its Hessian is not possible although one may obtain them from numerical simulations. 

To calibrate parameters $\vec{\theta} = \{\vec{\theta}^i\}_{i=1}^2$, we minimize the following loss function,
\begin{equation}
    \vec{\theta} = \underset{\vec{\theta}}{\argmin} \sum_{k=1}^{N_{\text{data}}} \big|\big|
    \parder{\psi^{\text{tot}}(I_1, I_2; \theta)}{\tensor{F}}|_{\tensor{F}^{k}}
    -
    \tensor{P}^{k}
    \big|\big|_F^2,
\end{equation}
where $||\cdot||_F$ is the Frobenius norm for a second-order tensor. We use the ADAM algorithm \citep{kingma2014adam} to minimize this single objective function.

\subsection{Symbolic Regression}
\label{sec:symbReg}
Symbolic regression (SR) aims to identify a mathematical expression that optimally fits a provided dataset without predefining the expression's form. 
The set of possible expressions is typically defined by establishing a range of mathematical operators, functions, variables, and constants, which are then efficiently represented using binary trees (refer to Figure \ref{fig:bnary_tree}). Genetic programming is a widely adopted stochastic optimization technique for exploring the combinatorial space encompassing all conceivable mathematical expressions \citep{koza1994genetic, schmidt2009distilling, wang2019symbolic}. Moreover, recent advancements have been made utilizing deep reinforcement learning methods, offering alternative and efficient means for discrete searches within the domain of tree data structures \citep{petersen2019deep, landajuela2021discovering}.

\begin{figure}[h]
  \centering
  \includegraphics[width=0.5\textwidth]{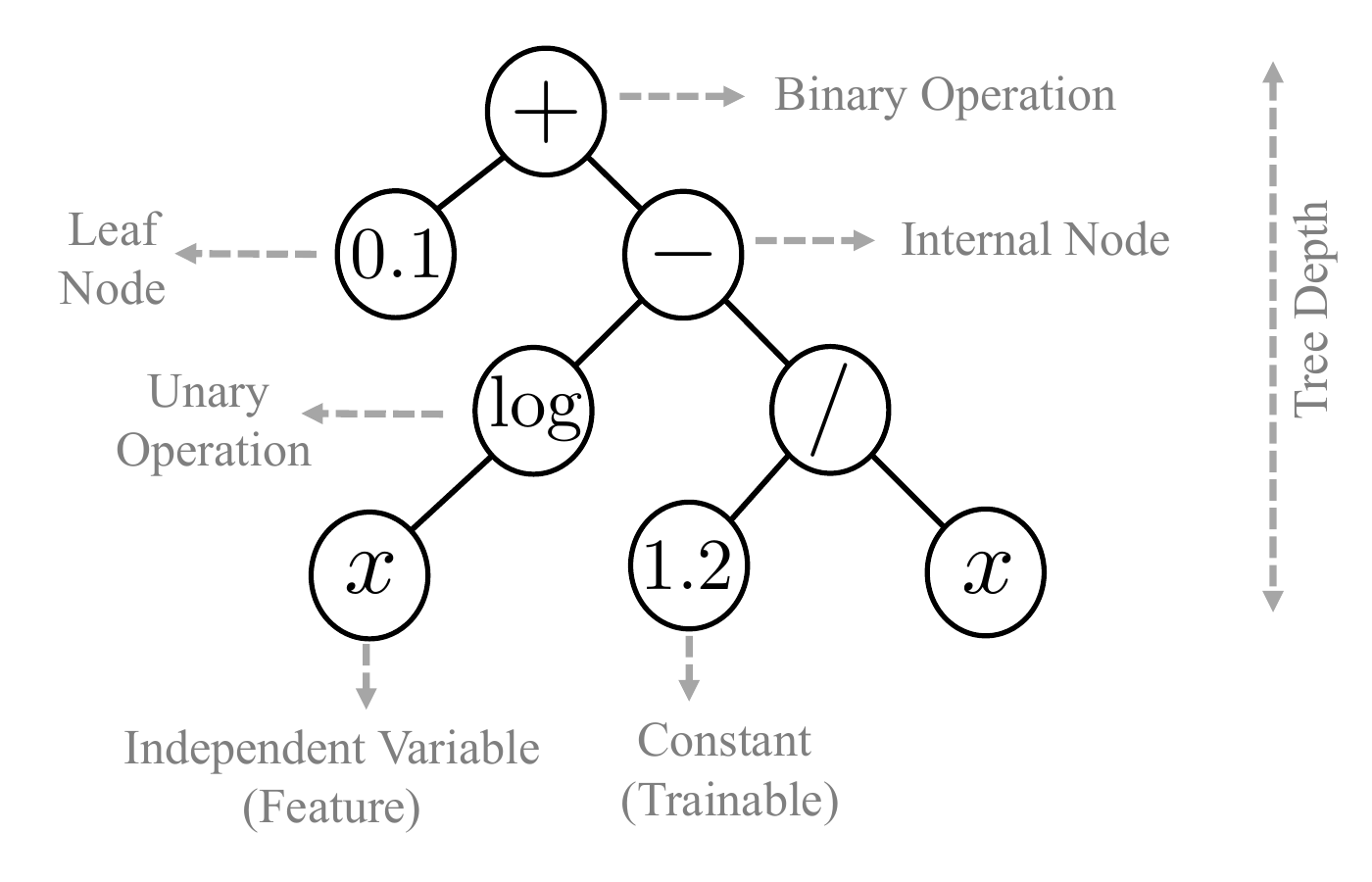}
  \caption{Binary expression tree for equation representation.  The program $\log(x) - \frac{1.2}{x} + 0.1$ is represented by an expression tree with a depth 3 and size of 8 (the total number of nodes).
  }
  \label{fig:bnary_tree}
\end{figure}

Genetic programming employs a population of candidate solutions, randomly generated at the algorithm's start, to represent the space of potential expressions. Each individual candidate solution is described as a binary expression tree (refer to Figure \ref{fig:bnary_tree}), where the leaves symbolize input variables or constants, and the internal nodes denote mathematical operations or functions. The algorithm evaluates the fitness of each candidate solution by comparing its output values to the target values. In this study, we use the mean square error (MSE) as the fitness measure, i.e., 
\begin{equation}
    \text{MSE}(y^{\text{model}}, y^{\text{data}})
    = \frac{1}{N_{\text{data}}} \sum_{i=1}^{N_{\text{data}}} 
    (y^{\text{model}}_i - y^{\text{data}}_i)^2.
\end{equation}

The genetic programming algorithm utilizes an iterative process that evolves the population of candidate solutions through three main operations: selection, crossover, and mutation. This process draws inspiration from natural selection. In the selection step, higher-performing individuals from the current population are chosen based on their fitness scores. Crossover (shown in Figure \ref{fig:cross_mut_ops}(a)) combines the genetic information of two individuals to generate offspring with characteristics inherited from both parents. On the other hand, mutation (depicted in Figure \ref{fig:cross_mut_ops}(b)) involves randomly altering some genetic components of the individual, thereby introducing new variations into the population.

\begin{figure}[h!]
\centering
\subfigure[]
{\includegraphics[height=0.35\textwidth]{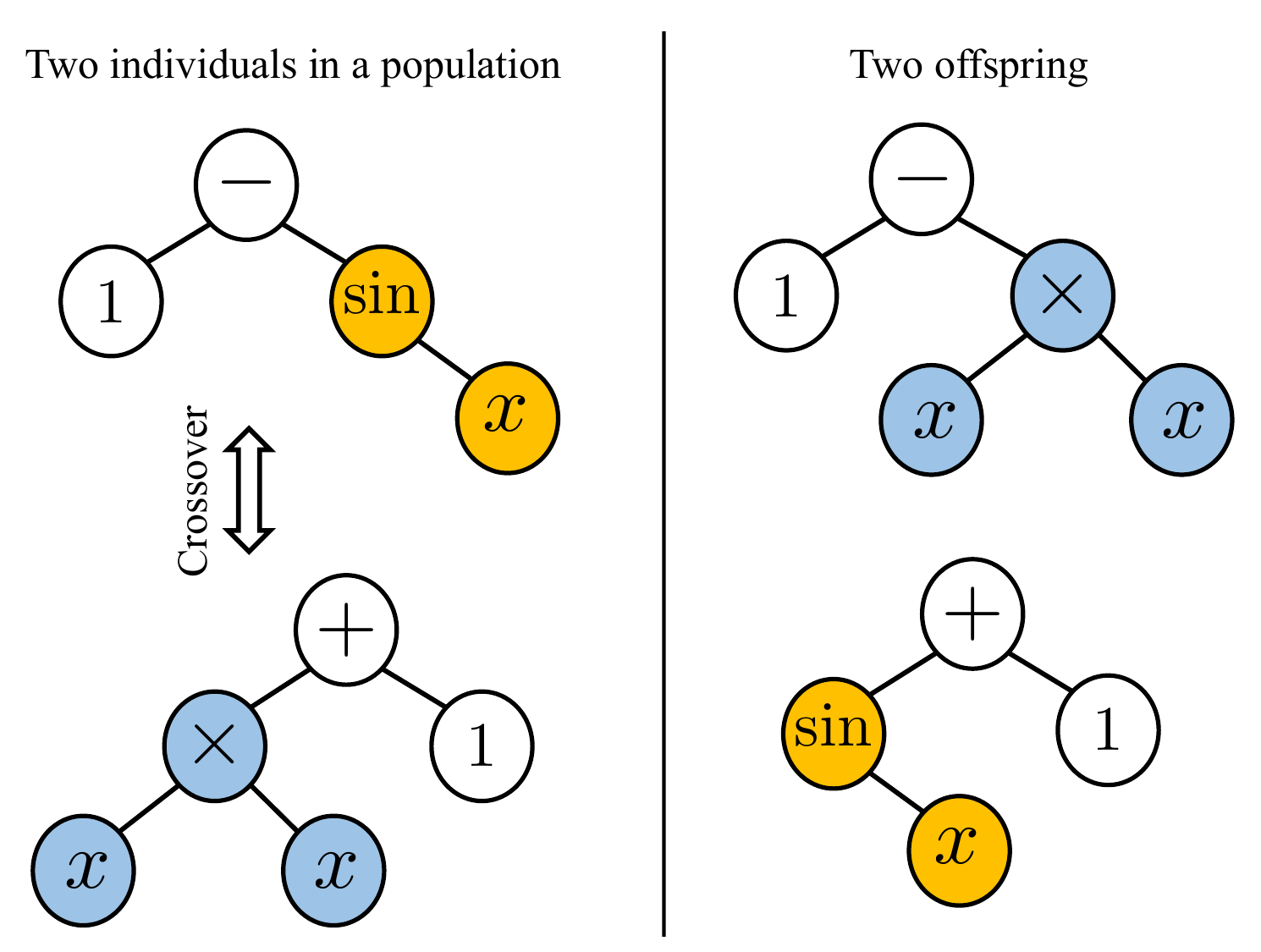}}
\subfigure[]
{\includegraphics[height=0.35\textwidth]{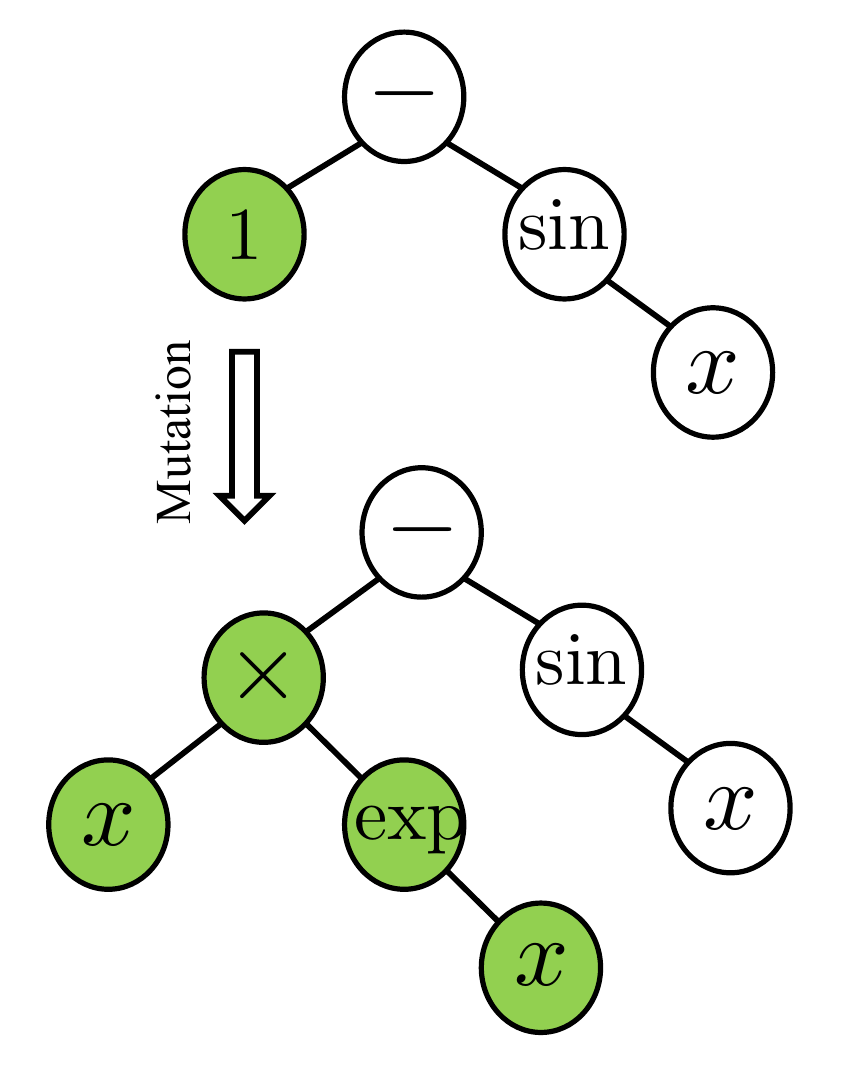}}
  \caption{
  (a) crossover and (b) mutation operations in an evolutionary-based SR algorithm.
  }
\label{fig:cross_mut_ops}
\end{figure}

The genetic programming algorithm creates improved candidate solutions iteratively until it finds a satisfactory mathematical expression that fits the data well. During SR optimization, there is a tradeoff between model complexity and expressivity. Striking a balance between the complexity and expressivity of the discovered analytical expression is vital to enhance interpretability and mitigate overfitting. Users can choose the best equation for their requirements, considering desired accuracy and simplicity.

\definition[Complexity score in SR]{The assessment of a symbolic equation's complexity is often qualitative due to the lack of a universally agreed-upon definition. In this study, we adopt the complexity measure proposed in \cite{cranmer2023interpretable}, which uses the number of nodes in the expression tree as the complexity score. While assigning varying weights to different node types is feasible, such as considering $\exp(\cdot)$ as more complex than $+$, we refrain from incorporating such weightings in our analysis.
}

Symbolic equations used during inference have a smaller memory footprint compared to neural networks, making them more portable.  Research has shown that equations discovered by SR generalize well beyond the training data support \citep{kim2020integration}. However, when dealing with multi-dimensional vector-valued or tensor-valued functions, SR becomes notably more challenging due to the complex combinatorial nature of the optimization problem involved in searching for the appropriate mathematical expression \citep{icke2013improving}. Our proposed method overcomes this concern, as mentioned earlier, by extracting a symbolic equation for each single-variable to single-variable ICNN submodel. This feature facilitates parallel computing, allowing SR algorithms to run simultaneously without any issues.

\section{Reduced forms for experimental calibration}
\label{sec:expReduce}
Motivated by real experimental setups, we calibrate model parameters for uniaxial extension (UE), equibiaxial extension (EBE), and pure shear (PS) loading conditions in the incompressible limit. In this section, we summarize reduced forms of equations for a general incompressible, isotropic free energy functional $\psi(I_1, I_2)$ associated with each of these boundary conditions.

\begin{figure}[h]
  \centering
  \includegraphics[width=0.8\textwidth]{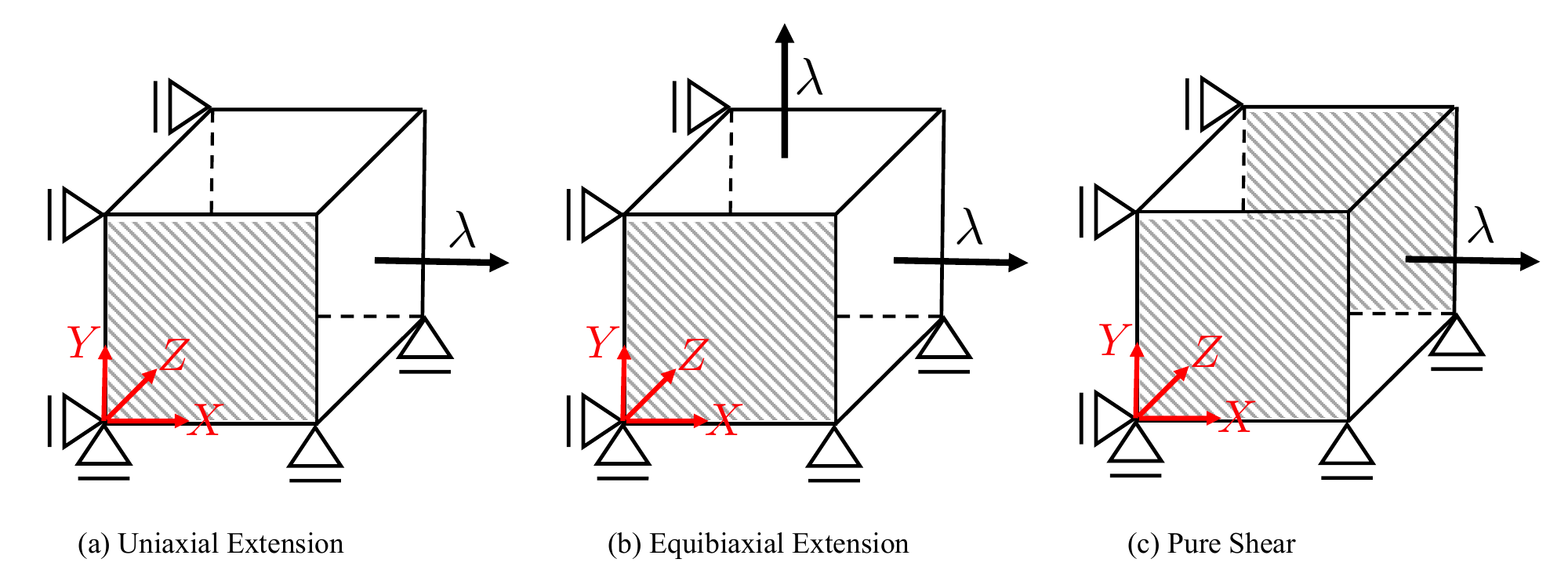}
  \caption{Different loading conditions for model calibration and training.}
  \label{fig:loading}
\end{figure}

\subsection{Uniaxial Extension}
The easiest experimental setup is UE where only one side of the cubic sample is deformed along its normal axis, e.g., the $X$-axis as shown in Figure~\ref{fig:loading}(a). Based on this condition, we only control the stretch $\lambda_X = \lambda$, but, due to the symmetry induced by isotropy, the other two stretches are equal, and, hence the incompressibility condition leads to $\lambda_Y = \lambda_Z = \lambda^{-0.5}$. Under these boundary conditions, only $P_{11}$ becomes non-zero which can be found in Eqs. \eqref{eq:incomp} and \eqref{eq:final}. In summary, the following reduced form is obtained for the UE loading condition,
\begin{align}
    &\tensor{F} = \text{diag}(\lambda, \lambda^{-0.5}, \lambda^{-0.5}),\\
    &\tensor{P} = \text{diag}(P_1, 0, 0),\\
    &P_1 = 2 
    \left(
        \parder{\psi}{I_1} + \frac{1}{\lambda}\parder{\psi}{I_2}
    \right)
    \left(
        \lambda - \frac{1}{\lambda^2}
    \right).
\end{align}

\subsection{Equibiaxial Extension}
EBE is similar to UE where the extension loading is applied over two faces of the cubic sample as depicted in Figure~\ref{fig:loading}(b). Based on this condition, we control the stretches $\lambda_X = \lambda_Y = \lambda$, while the other stretch is constrained to $\lambda_Z = \lambda^{-2}$, due to incompressibility. Under these boundary conditions, only $P_{11}$ and $P_{22}$ become non-zero and the same because of the symmetry induced by isotropy. In summary, the following reduced form is obtained for EBE loading condition,
\begin{align}
    &\tensor{F} = \text{diag}(\lambda, \lambda, \lambda^{-2}),\\
    &\tensor{P} = \text{diag}(P_1, P_1, 0),\\
    &P_1 = 2 
    \left(
        \parder{\psi}{I_1} + \lambda^2 \parder{\psi}{I_2}
    \right)
    \left(
        \lambda - \frac{1}{\lambda^5}
    \right).
\end{align}

\subsection{Pure Shear}
PS loading condition can be achieved by restricting UE from any movement in the $Z$-axis, i.e., $\lambda_Z=1$ (see Figure~\ref{fig:loading}(c).) We control the stretch $\lambda_X = \lambda$, while the other stretch is constrained to $\lambda_Y = \lambda^{-1}$. Under these boundary conditions, only $P_{11}$ and $P_{33}$ become non-zero. In summary, the following reduced form is obtained for PS loading condition,
\begin{align}
    &\tensor{F} = \text{diag}(\lambda, \lambda^{-1}, 1)\\
    &\tensor{P} = \text{diag}(P_1,0, P_3),\\
    &P_1 = 2 
    \left(
        \parder{\psi}{I_1} + \parder{\psi}{I_2}
    \right)
    \left(
        \lambda - \frac{1}{\lambda^3}
    \right),\\
    &P_3 = 2 
    \left(
        \parder{\psi}{I_1} + \lambda^2 \parder{\psi}{I_2}
    \right)
    \left(
        1 - \frac{1}{\lambda^2}
    \right).
\end{align}

\section{Numerical Examples}
\label{sec:numExamp}
We demonstrate the application of our proposed method in discovering hyperelastic constitutive models through two numerical examples. Firstly, in Section \ref{sec:treloarData}, we utilize experimental data from the literature to identify a material model. We compare the simplicity and accuracy of our discovered model with recent efforts in the field. Additionally, we assess and discuss the extrapolation capabilities of our method compared to a standard neural network approach.

In the second problem, in Section \ref{sec:compositeProb}, we showcase the method's potential for modeling more complex materials. To showcase this ability, we generate virtual experimental data using finite element simulations for a randomly synthesized particle-reinforced composite. 

\remark{
Our framework is implemented using \texttt{PyTorch} \citep{paszke2019pytorch} and \texttt{PySR} \citep{cranmer2023interpretable} open-source packages for neural network and SR training, respectively. 
}

\subsection{Treloar's Data for Vulcanized Rubber}
\label{sec:treloarData}
In this example, we demonstrate the application of the proposed scheme in finding a hyperelastic energy functional for the well-known Treloar's experimental data for vulcanized rubber \citep{treloar1944stress}. We then compare the accuracy and simplicity of the discovered model with a reference model. Additionally, we highlight the significance of physics augmentation and interpretability for achieving better generalization compared to black-box neural network models.

We use 90\% of Treloar's data to train the model discovery algorithm and the remaining 10\% to validate. Training-related hyperparameters and setups are discussed in Appendix \ref{sec:paramTrel}. The proposed algorithm discovers the following equation,
\begin{equation}
    \psi = 
    c_{11} I_1 + c_{12}\exp{(c_{13}I_1)}
    +
    c_{21} I_2 + c_0,
    \label{eq:pr1-us}
\end{equation}
where $c_{11}=0.1502$, $c_{12}=0.0771$, $c_{13}=0.0665$, $c_{21}=0.0035$, and $c_0 = 261.5397$. In Table \ref{tab:example}, we compare its accuracy with another equation found by a recently introduced SR algorithm in \cite{abdusalamov2023automatic},
\begin{equation}
    \psi_{\text{ref}} = \sqrt{
    0.93296\exp{(0.08031 I_1)}
    +
    \sqrt{I_1 - 0.080316}
    +
    (0.0232113 I_1 + 0.021633)I_1
    }.
    \label{eq:pr1-itskov}
\end{equation}
In this table, the $R^2$-score is chosen as the measure of accuracy over the entire data.  The $R^2$-score between predicted values $\{y_i\}_{i=1}^{N_{\text{data}}}$ and ground-truth values $\{y^*_{i}\}_{i=1}^{N_{\text{data}}}$ is defined as follows,
\begin{equation}
    R^2(y^*, y) = 1 - 
    \frac{
    \sum_{i=1}^{N_{\text{data}}} (y^*_i - y_i)^2
    }
    {
    \sum_{i=1}^{N_{\text{data}}} (y^*_i - \text{mean}(y^*))^2
    }.
\end{equation}
Based on this measure, our discovered equation shows improved performance in the EBE dataset, with slight enhancements observed in the UE and PS datasets as well. Considering the depth of the expression tree and square root operators as more complex, our discovered equation is qualitatively simpler.  Additionally, in our work, we use 90\% of the data to find the equation, whereas the reference utilizes the entire dataset.
\begin{table}[h]
  \centering
  \caption{Comparing $R^2$-scores of our discovered equation and the reference equation for Treloar's data.}
  \label{tab:example}
  \begin{tabular}{c c c c}
    \hline
    Equation & Uniaxial Extension & Equibiaxial Extension & Pure Shear \\
    \hline
    This work (Eq.~\eqref{eq:pr1-us}) & 0.99\textcolor{green}{83} & 0.\textcolor{green}{9904} & 0.9\textcolor{green}{963} \\
    \hline
    Reference (Eq.~\eqref{eq:pr1-itskov}) & 0.9979 & 0.8692 & 0.9887 \\
  \end{tabular}
\end{table}

The trained ICNN model $\psi_1(I_1)$ for the PNAM solution $\psi(I_1, I_2) = \psi_1(I_1) + \psi_2(I_2)$ is depicted in Figure \ref{fig:psi1Symb} by the orange curve. The result for $\psi_2(I_2)$ is omitted for brevity as it exhibits characteristics of a linear function, making it straightforward to handle. As mentioned earlier, in step 2 of the proposed method, we apply the SR algorithm to extract an equivalent symbolic equation for the learned model. The SR training process takes less than 3 minutes on a MacBook Pro laptop with a quad-core Intel Core i5 processor running at 1.4 GHz and using 16 GB of RAM. The SR algorithm lists several candidates ranked by their accuracy/simplicity, with more accurate candidates being less simple in forms. 

In Figure \ref{fig:psi1ComplAcc}, we present accuracy vs. complexity plots for both the function $\psi_1(I_1)$ and its gradient $\psi_1^\prime(I_1)$. While the gradient information is not employed in the fitness measure, we recommend considering this information for model selection during the post-processing step among all available candidates. This consideration is crucial because the accuracy of the gradient directly impacts the final stress predictions. We employ these selection criteria based on two factors: candidates must (1) have errors below a specified threshold and (2) exhibit an abrupt accuracy improvement compared to their simpler counterparts (see the red vertical line in Figure \ref{fig:psi1ComplAcc}.) A comparison between the selected equation and the ICNN model is illustrated in Figure \ref{fig:psi1Symb}, with the symbolic equation depicted by the blue curve. A good agreement between the ICNN model and the symbolic equation is observed. Moreover, the behavior of the symbolic function beyond the ICNN training data appears to be reasonable.

\begin{figure}[h]
  \centering
  \includegraphics[width=0.9\textwidth]{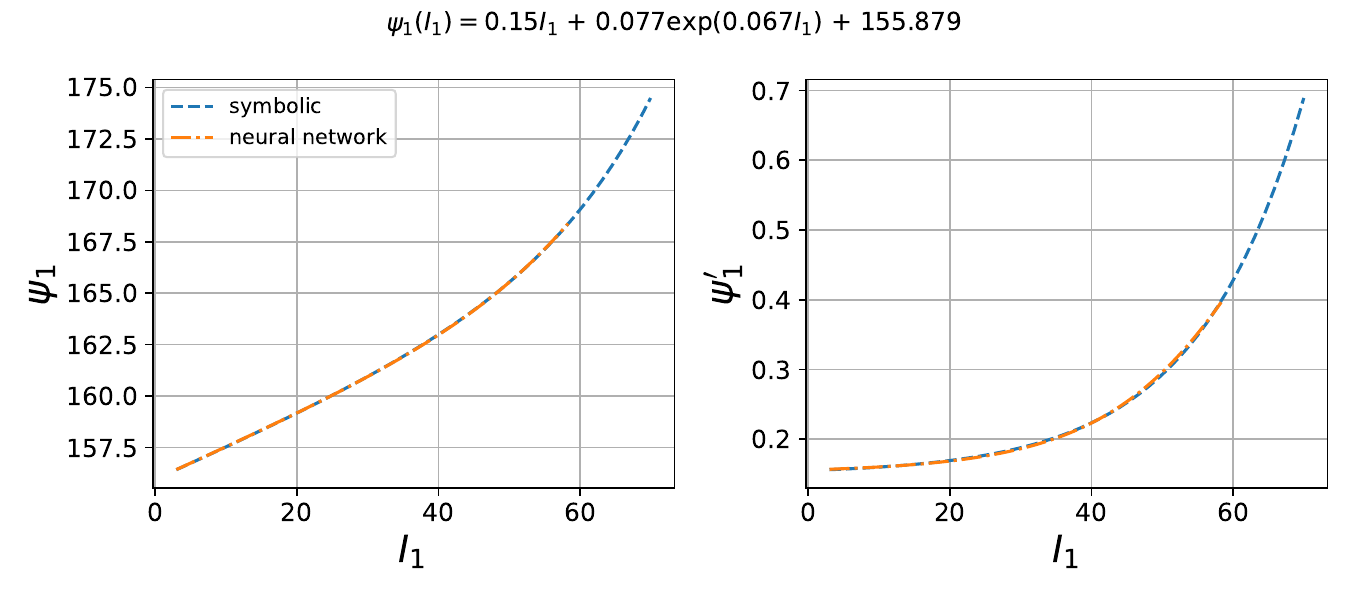}
  \caption{Discovered symbolic function from the trained neural network representation. The discovered function and its gradient are shown in left and right plots, respectively. Only data within the  training range (orange curves) are used for the SR.
  }
  \label{fig:psi1Symb}
\end{figure}

\begin{figure}[h]
  \centering
  \includegraphics[width=0.9\textwidth]{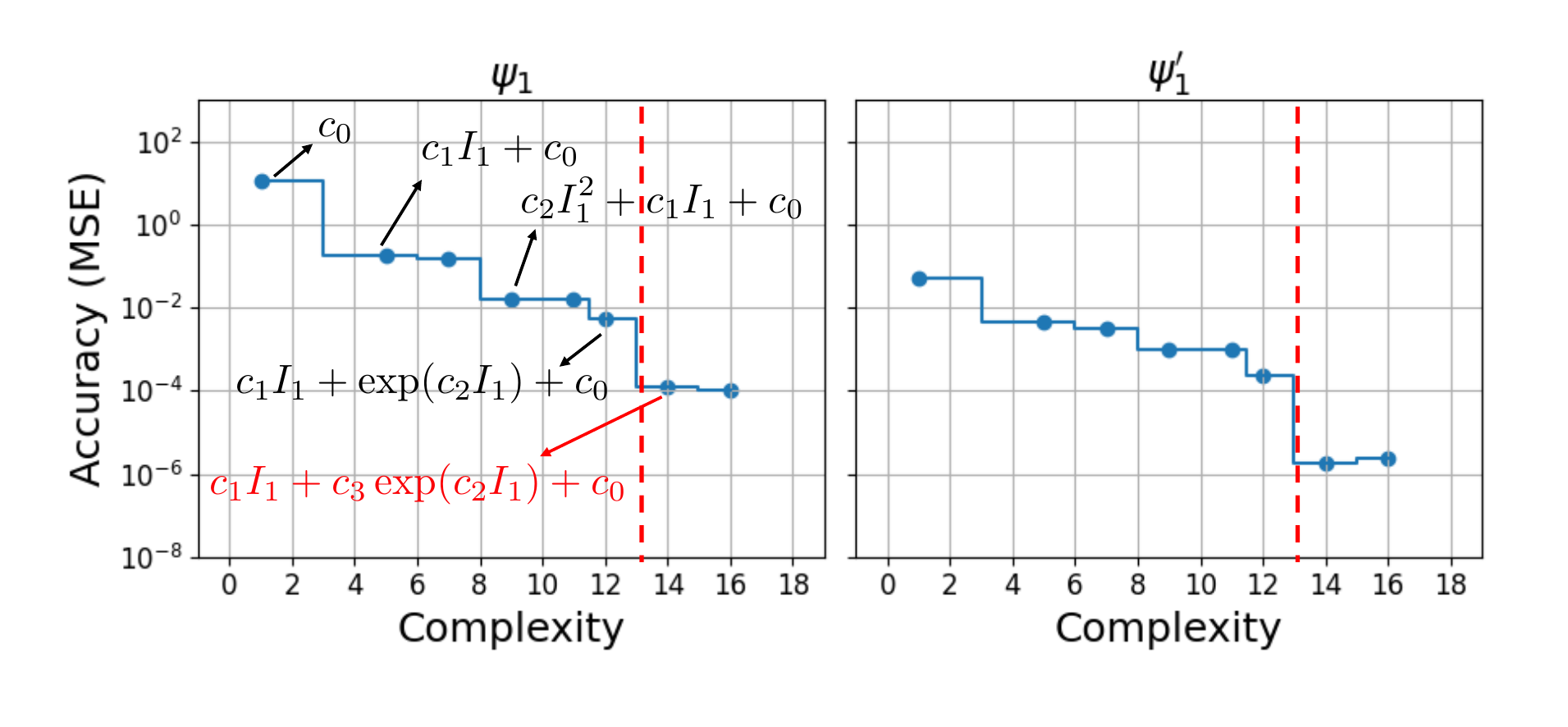}
  \caption{
  Complexity (parsimony) vs. accuracy of the discovered function and its gradient for Treloar's data. More accurate equations  tend to be more complex (i.e., contain more terms.) Achieving a balance between accuracy and simplicity is essential for both the discovered function and its gradient.
  }
  \label{fig:psi1ComplAcc}
\end{figure}

We further compare the proposed scheme with the solution obtained from a vanilla ANN. In the case of the vanilla ANN, we substitute $\psi(I_1, I_2)$ in Eq.~\eqref{eq:final} with a conventional multilayer perceptron (MLP), accepting both $I_1$ and $I_2$ in its input layer. The model accuracy during the optimization iterations in step 1 for the PNAM and vanilla ANN is illustrated in Figure \ref{fig:pr1-loss} for both the training and validation data. While the final loss value of the training curve for the vanilla ANN is lower than that for the PNAM, the larger discrepancy between the validation and training curves for the vanilla ANN indicates overfitting. This overfitting may lead to less reliable predictions in the extrapolation regime.

\begin{figure}[h!]
\centering
\subfigure[PNAM]
{\includegraphics[height=0.35\textwidth]{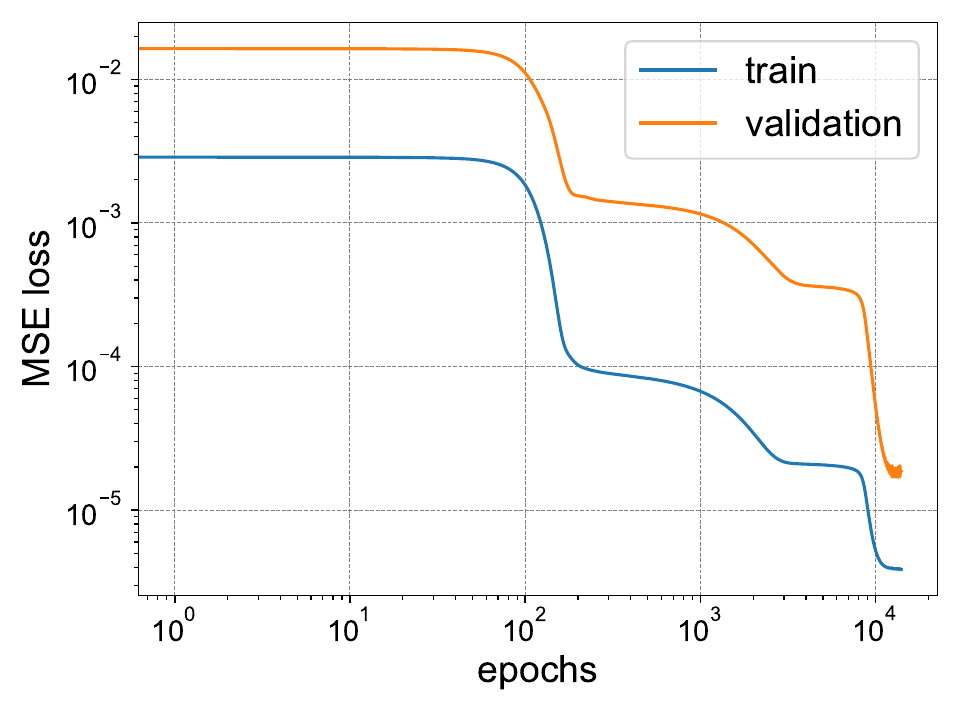}}
\subfigure[Vanilla ANN]
{\includegraphics[height=0.35\textwidth]{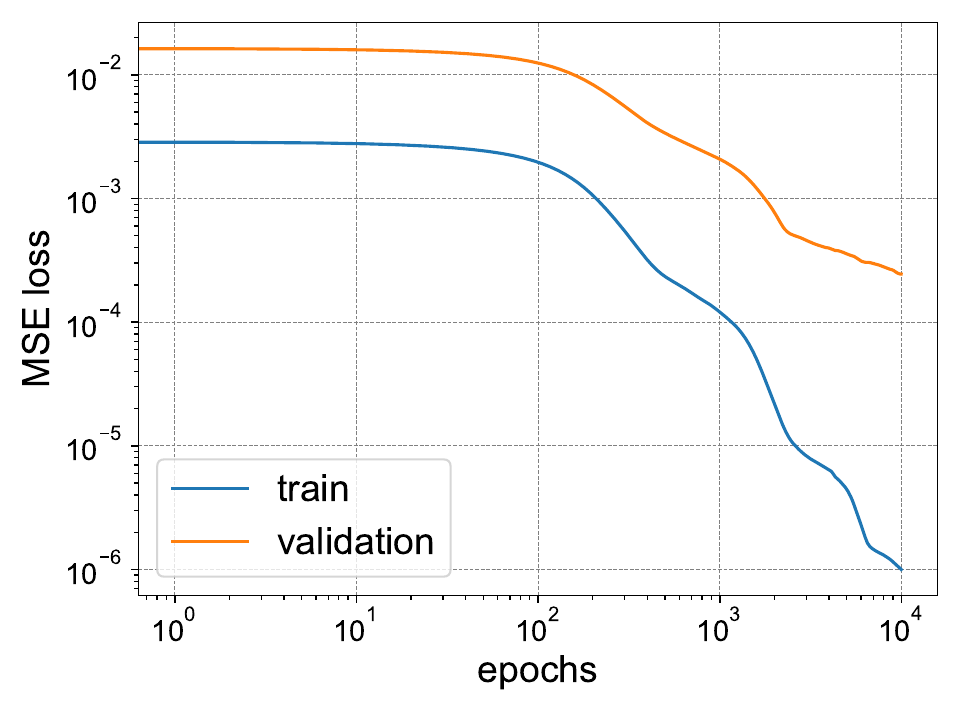}}
  \caption{Training and validation losses for model calibration in step 1 where a neural network solution is found. $90\%$ of the data is utilized for training,  and $10\%$ is utilized for validation.
  }
  \label{fig:pr1-loss}
\end{figure}

To assess both models' capabilities, we compare their $R^2$-scores for all three datasets, as shown in Figure \ref{fig:pr1-predic}. Both models exhibit satisfactory $R^2$-scores across all available data. The PNAM demonstrates better performance for UE and EBE, while the vanilla ANN shows slightly better accuracy in the case of the PS dataset. However, it is essential to note that we observe non-physical behavior for the vanilla ANN in the case of the EBE dataset when moving beyond the training data regime.

\begin{figure}[h!]
\centering
\subfigure[UE Symbolic]
{\includegraphics[height=0.3\textwidth]{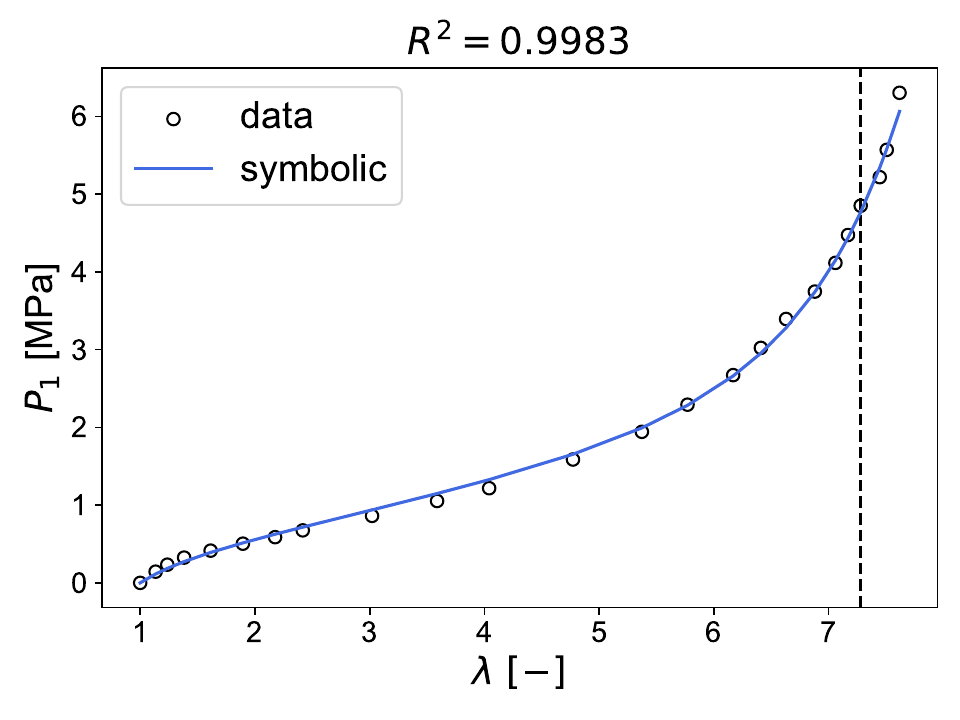}}
\subfigure[UE Vanilla ANN]
{\includegraphics[height=0.3\textwidth]{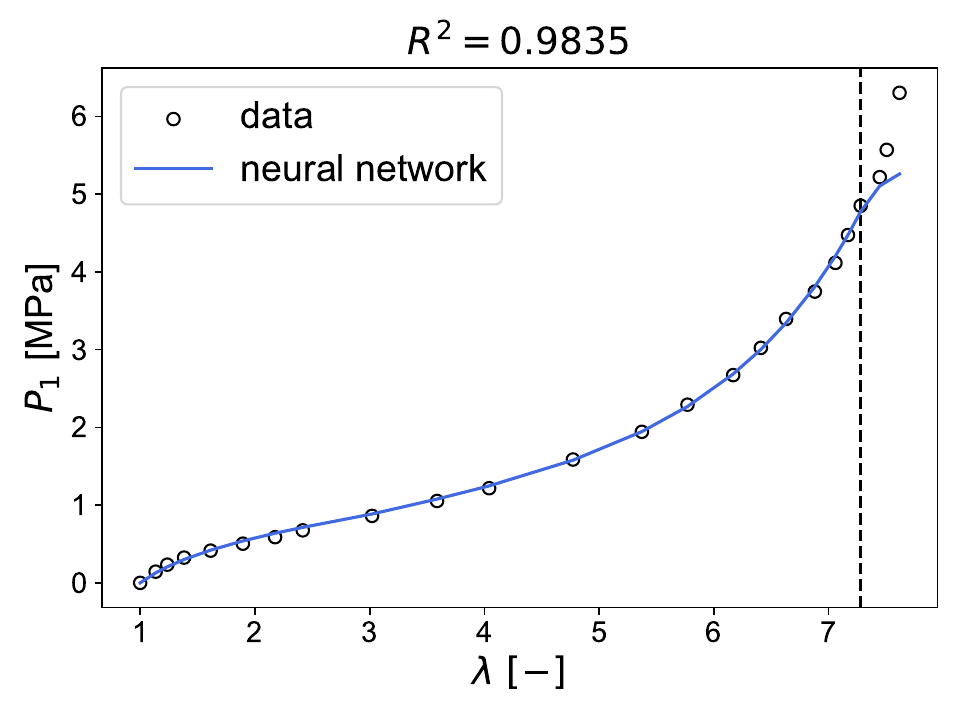}}
\subfigure[EBE Symbolic]
{\includegraphics[height=0.3\textwidth]{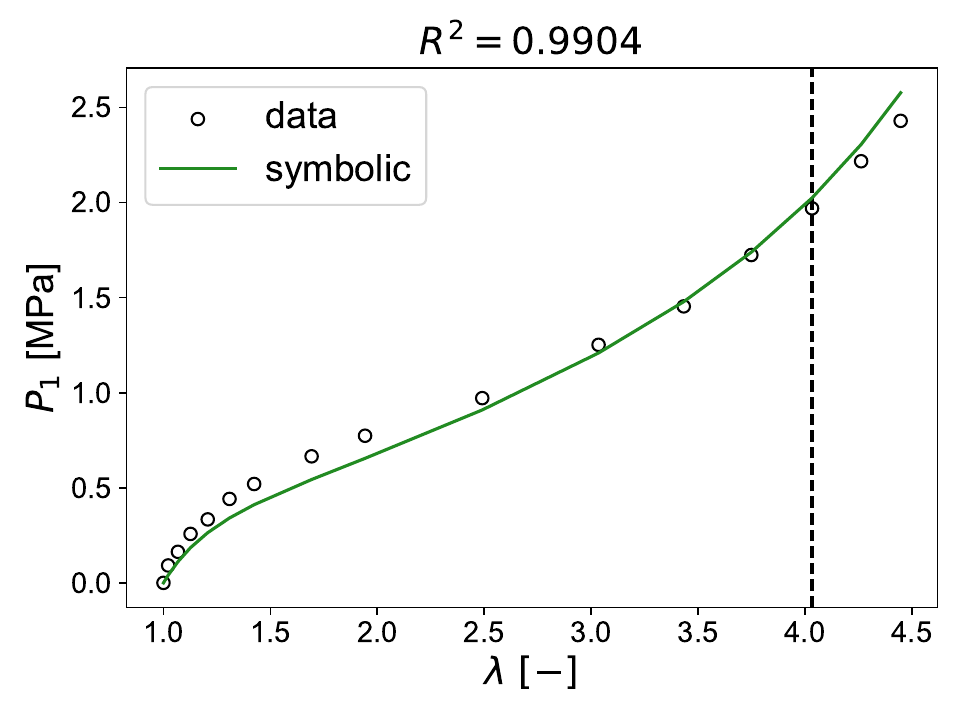}}
\subfigure[EBE Vanilla ANN]
{\includegraphics[height=0.3\textwidth]{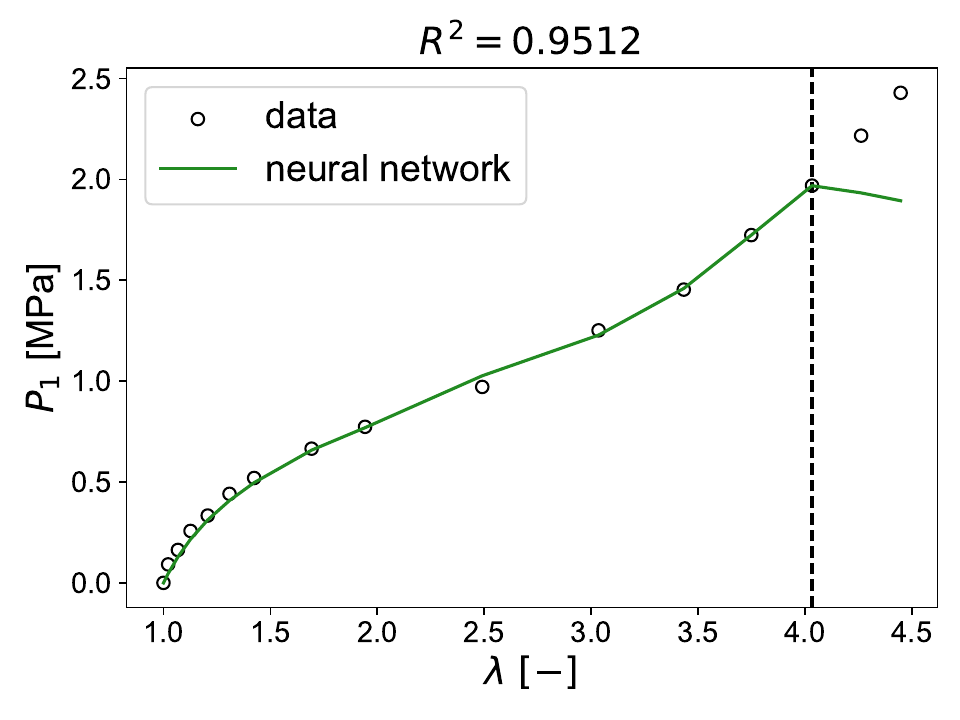}}
\subfigure[PS Symbolic]
{\includegraphics[height=0.3\textwidth]{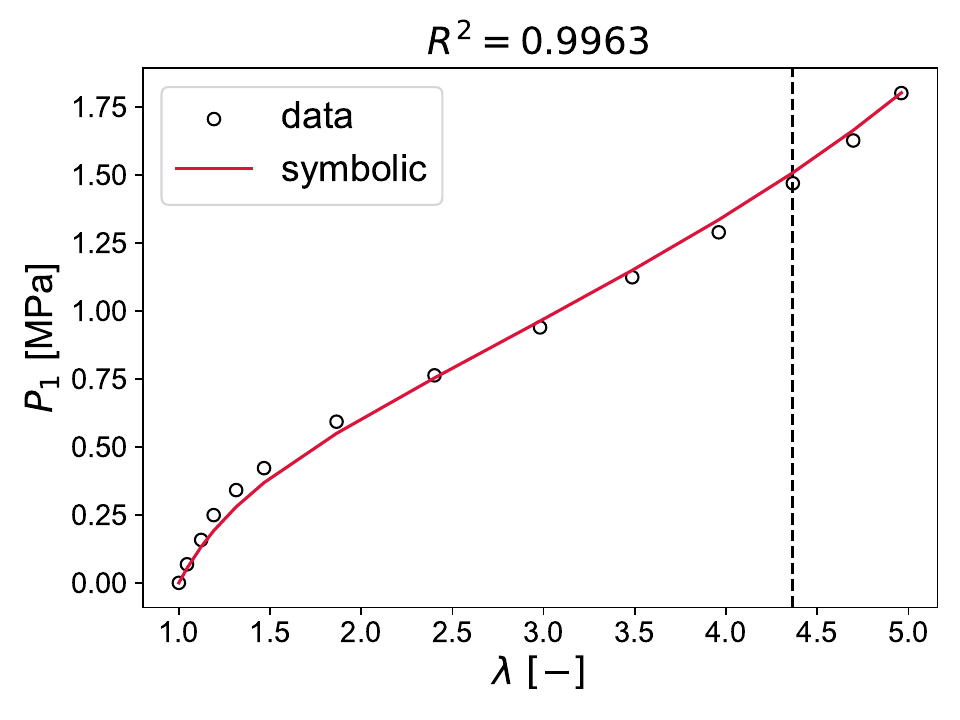}}
\subfigure[PS Vanilla ANN]
{\includegraphics[height=0.3\textwidth]{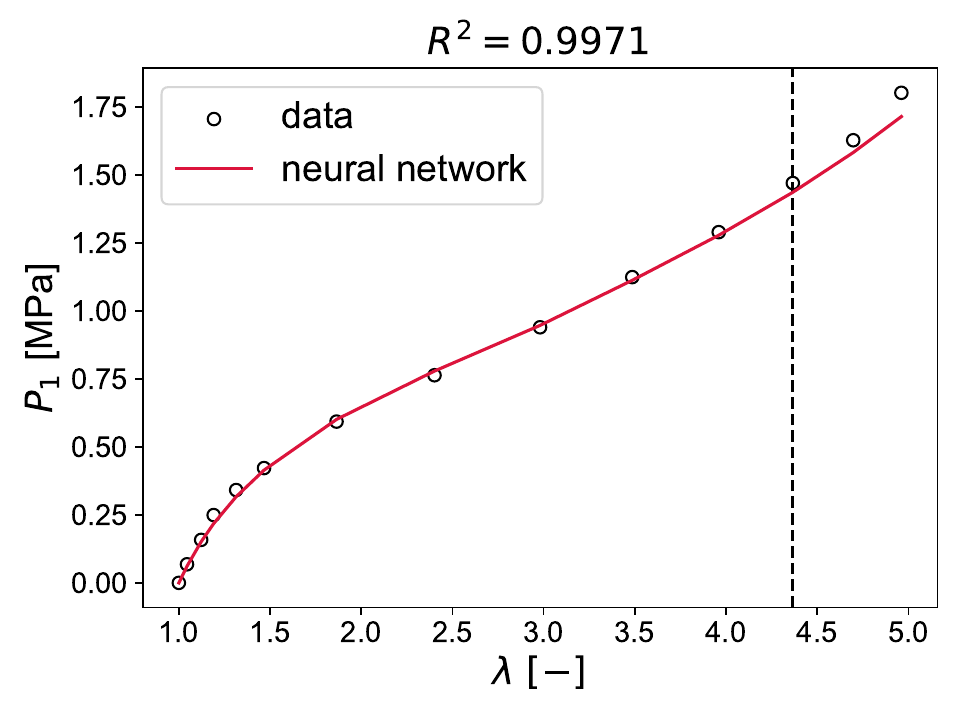}}
    \caption{Predictions of the symbolic equation discovered from the PNAM and a vanilla ANN for Treloar's data. The dashed vertical dashed line shows the distinction between training and validation data, where validation data are to the right of the line. The $R^2$-score indicates the accuracy measure of the entire data (training and validation) for each experimental loading condition. Training is conducted on all three loading conditions at the same time.
    }\label{fig:pr1-predic}
\end{figure}

In order to highlight the extrapolation power resulting from physics augmentation, simplicity, and interpretability of the proposed scheme, we compare predictions of the symbolic equation discovered in this work (Eq.~\eqref{eq:pr1-us}), the one found in the reference (Eq.~\eqref{eq:pr1-itskov}), and the vanilla ANN in Figure \ref{fig:pr1-ref-comp}, where we extend far beyond the training data ranges. Given that we lack access to real data at these ranges, we cannot assess the accuracy of the symbolic equations. However, both equations exhibit reasonable behavior. On the other hand, the response of the vanilla ANN is notably unacceptable, especially under the EBE loading condition.

\begin{figure}[h!]
\centering
\subfigure[Symbolic]
{\includegraphics[height=0.35\textwidth]{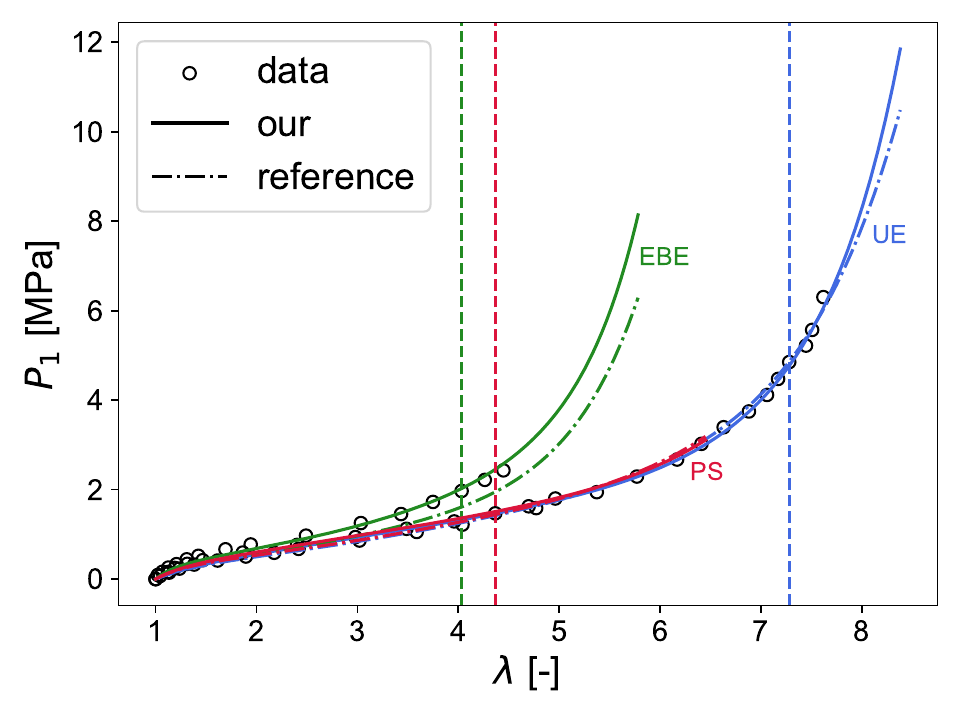}}
\subfigure[Vanilla ANN]
{\includegraphics[height=0.35\textwidth]{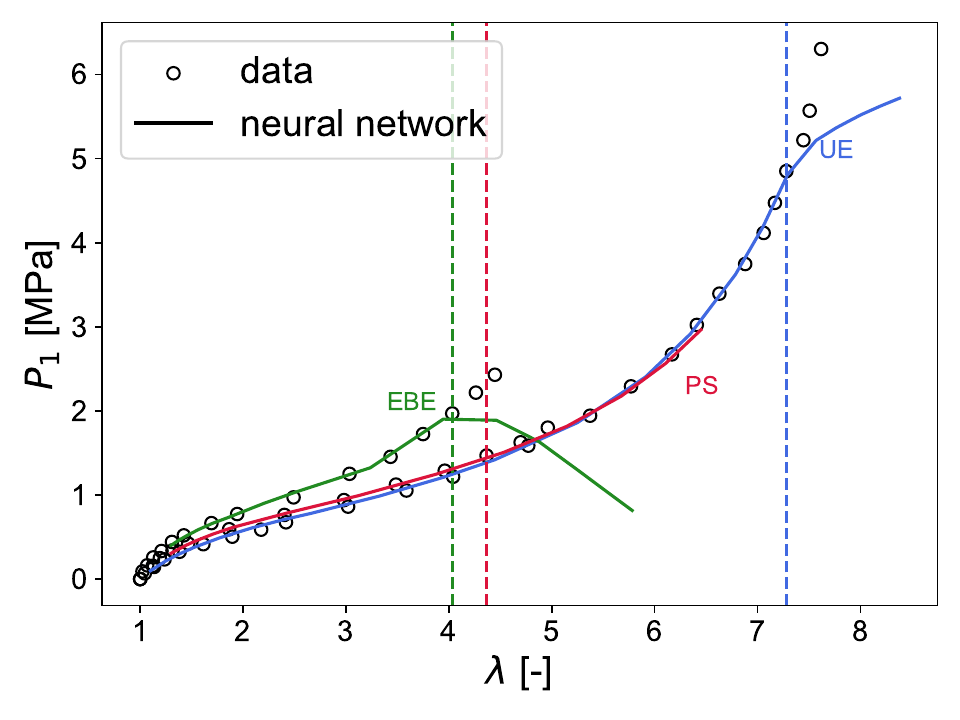}}
  \caption{Extrapolation capabilities of (a) symbolic equations, (b) the vanilla ANN for Treloar's data. In (a), we compare results obtained via Eq.~\eqref{eq:pr1-us} and the reference Eq.~\eqref{eq:pr1-itskov}. The dashed vertical lines indicate the distinction between training and validation data, where validation data are to the right of the lines. Our discovered equation extrapolates well and shows slightly better accuracy than the reference's equation. Predictions of the vanilla artificial neural network are not accurate beyond the training data regimes.
  }
  \label{fig:pr1-ref-comp}
\end{figure}

\subsection{Particle-reinforced Composite}
\label{sec:compositeProb}
In this example, we synthesize data through numerical experimentation on a fabricated particle-reinforced composite. This demonstration highlights the method's efficacy in handling more complex material behaviors. A cubic sample of the composite material with a length of $1$cm is generated by random placements of hard particles in a soft matrix as shown in Figure \ref{fig:compositeMesh}. Neo-Hookean \citep{ogden1997non} and the Exp-Log model \citep{khajehsaeid2013hyperelastic} are assumed for the particles and the matrix, respectively,
\begin{align}
    &\psi_{\text{NH}} = \frac{1}{2} \mu (I_1 - 3),
    \\
    &\psi_{\text{exp-log}} = 
    A \left[
        \frac{1}{a} \exp\left(a(I_1 - 3)\right)
        + b (I_1 - 1) \left(1-\log(I_1 - 2)\right)
        - \frac{1}{a} - b
    \right],
\end{align}
where material parameters $\mu = 0.5$ MPa, $A=0.195$ MPa, $a=0.018$,  and $b=0.33$; in the limit of small deformation $A$ is analogous to the shear modulus $\mu$ in the Neo-Hookean model. The uniaxial behavior of these constituents are shown in Figure \ref{fig:compositeMesh}(b).

\begin{figure}[h]
  \centering
  \includegraphics[width=0.8\textwidth]{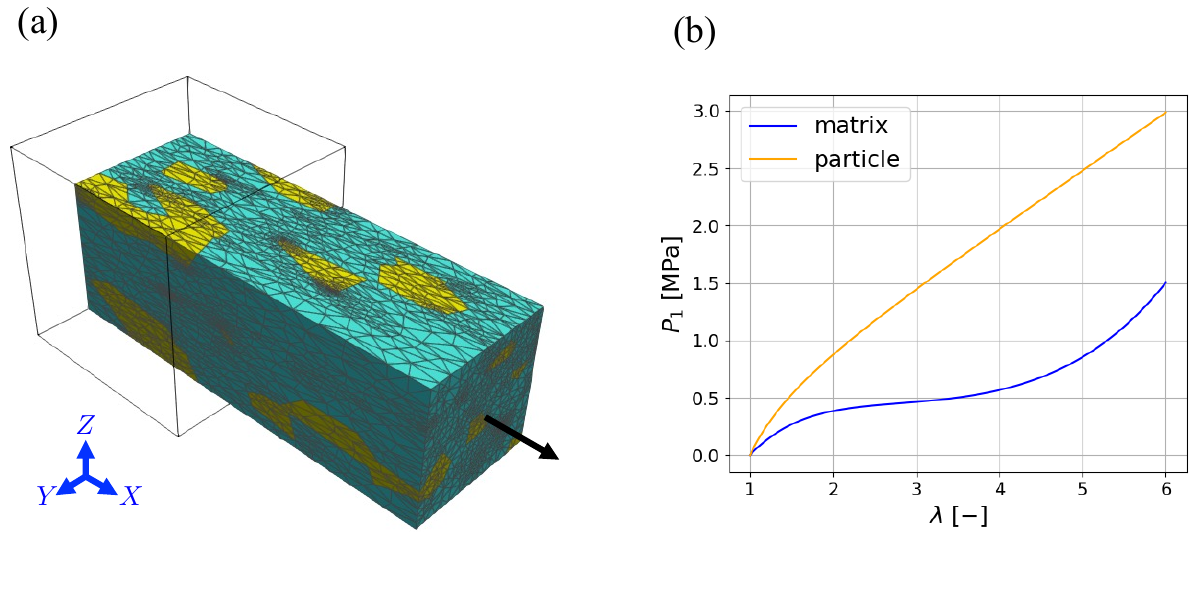}
  \caption{(a) Particle-reinforced composite at 100\% longitudinal strain, and (b) uniaxial behavior of each constituent. In (a), random particles with a total volume fraction of 0.22 are shown in yellow.
  }
  \label{fig:compositeMesh}
\end{figure}

We subject the composite to a uniaxial extension loading condition using a displacement control FEM simulator. During this process, we collect 64 data points with a displacement increment of $0.8$ mm by recording the average forces calculated over the sample surface where the loads are applied; forces are then used to compute stresses. The generated dataset for training the models is depicted in Figure \ref{fig:pr2compare}.

For training purposes, we utilize $75\%$ of the data, represented by green circle points in Figure \ref{fig:pr2compare}. The remaining portion of the data, which entirely lies in the extrapolation regime, is used for validation.  Our method discovers the following equation,
\begin{equation}
    \psi = 
     c_{11}I_1 + c_{12}I_1^4 + \exp(c_{13}I_1)
    +
    c_{21} I_2 + c_{22} I_2^2 \ln(I_2) + c_{0},
\end{equation}
where $c_{11} = 0.06948, c_{12}=1.82532 \times 10^{-6}, c_{13}=-0.05967,c_{21}=0.91519, c_{22}=0.0069682,  \text{and } c_0 = 15.87993107$.
$R^2$-scores of this symbolic equation and the PNAM from which it is discovered are almost the same, with the PNAM demonstrating slightly better performance, as shown at the top of Figure \ref{fig:pr2compare}. Additionally, the comparison between the symbolic representation of each of the learned ICNN models $\psi_i(I_i)$ is presented in Figures \ref{fig:pr2Psi1Symb} and \ref{fig:pr2Psi2Symb}. 
The results indicate good agreement between the neural networks and symbolic equations for both the function values and their gradients.

\begin{figure}[h]
  \centering
  \includegraphics[width=0.6\textwidth]{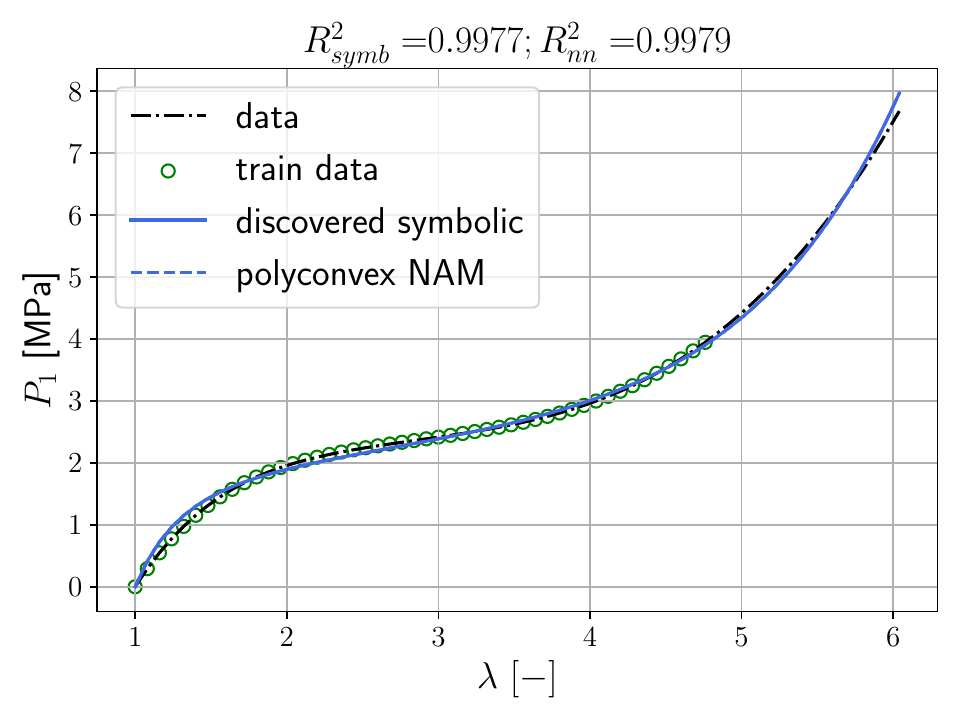}
  \caption{Comparison between the discovered symbolic energy functional and the learned PNAM for particle-reinforced composite data.
  }
  \label{fig:pr2compare}
\end{figure}

\begin{figure}[h]
  \centering
  \includegraphics[width=0.9\textwidth]{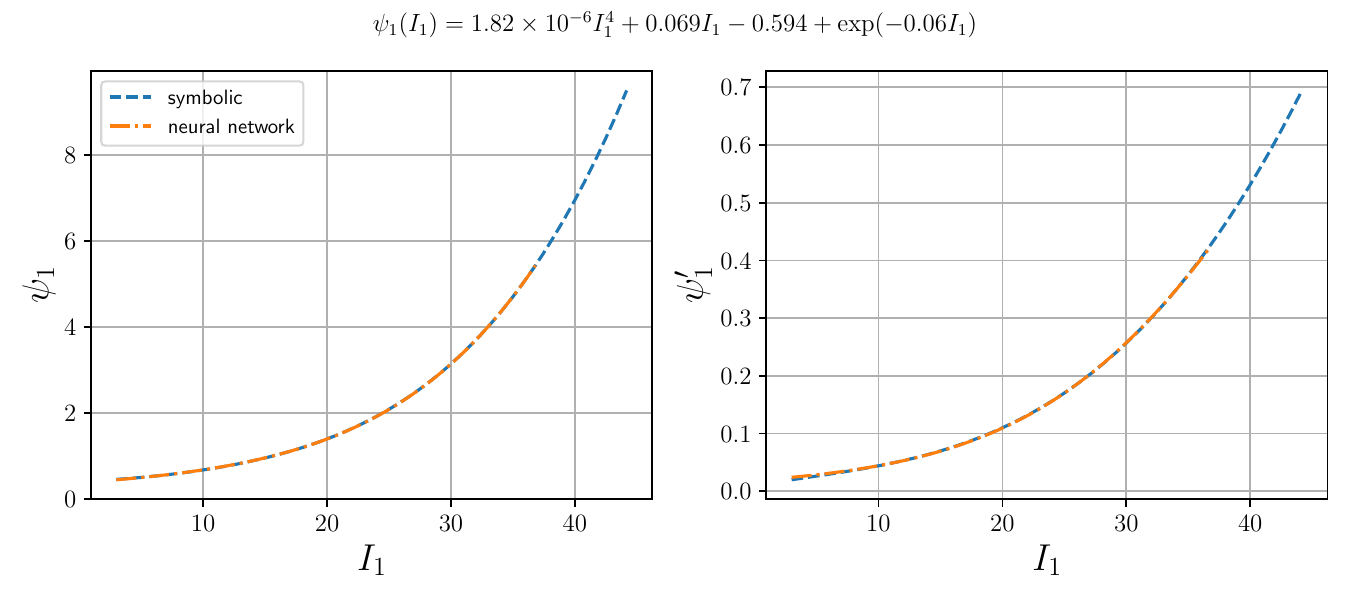}
  \caption{Discovered symbolic function from the trained neural network representation.  The discovered function and its gradient are shown in left and right plots, respectively. Only data in the training range (orange curves) are used for SR.
  }
  \label{fig:pr2Psi1Symb}
\end{figure}

\begin{figure}[h]
  \centering
  \includegraphics[width=0.9\textwidth]{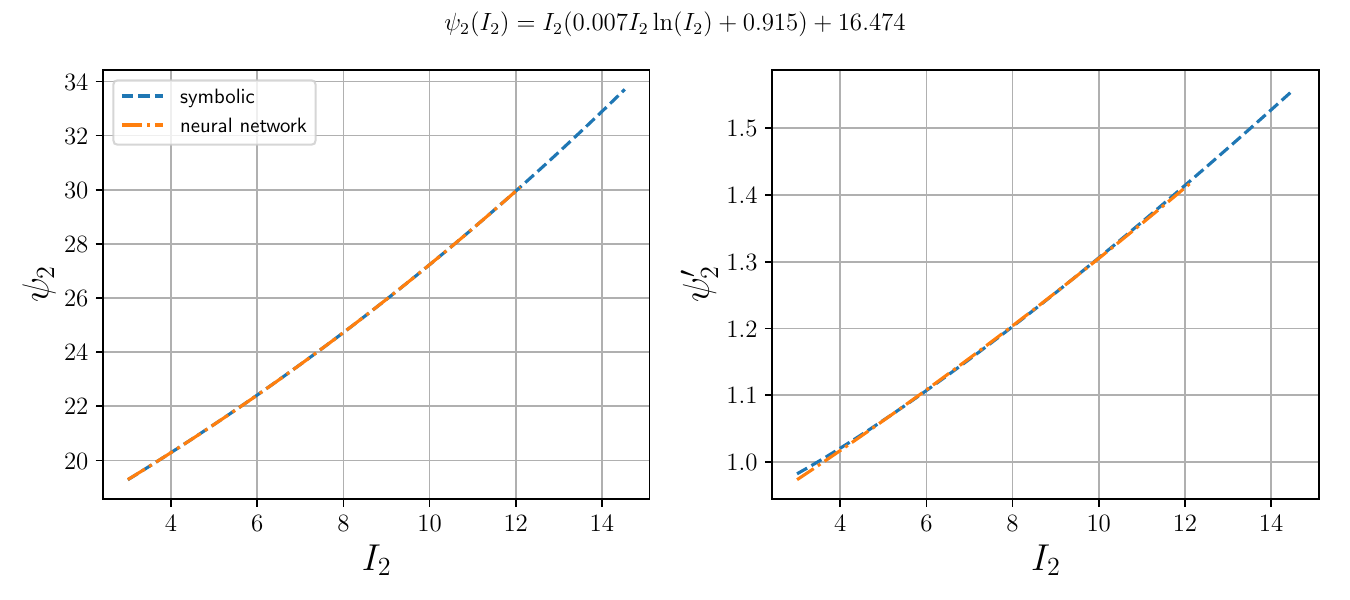}
  \caption{Discovered symbolic function from the trained neural network representation.  The discovered function and its gradient are shown in left and right plots, respectively. Only data in the training range (orange curves) are used for SR.
  }
  \label{fig:pr2Psi2Symb}
\end{figure}

\remark{
In this example, the inclusions are randomly distributed, without a particular geometric pattern or shape, and their volume fraction is relatively small (less than 30\%). Therefore, at the RVE level and under these circumstances, the anisotropic behavior is not significant, allowing us to approximate it with an isotropic model \citep{cho2023large}.  A similar approach to the random generation of inhomogeneities is utilized in other research for the data generation of isotropic models \citep{kalina2022automated}.
}

\subsubsection{Remarks on Model Execution Time}
Another advantage of obtaining symbolic equations is their computational efficiency when compared to their neural network-based counterparts. In Figure \ref{fig:pr2-time-comp}, we compare the execution times for inference between the PNAM and SR. As anticipated, the runtime for symbolic calculations is shorter than that of neural network forward operations, primarily due to the reduced number of floating-point operations in the symbolic equation. While automatic differentiation provides implementation flexibility, it introduces additional overhead due to the construction of the computational graph, which can be avoided if symbolic operations are effectively managed manually.

\begin{figure}[h!]
\centering
\subfigure[strain energy]
{\includegraphics[height=0.35\textwidth]{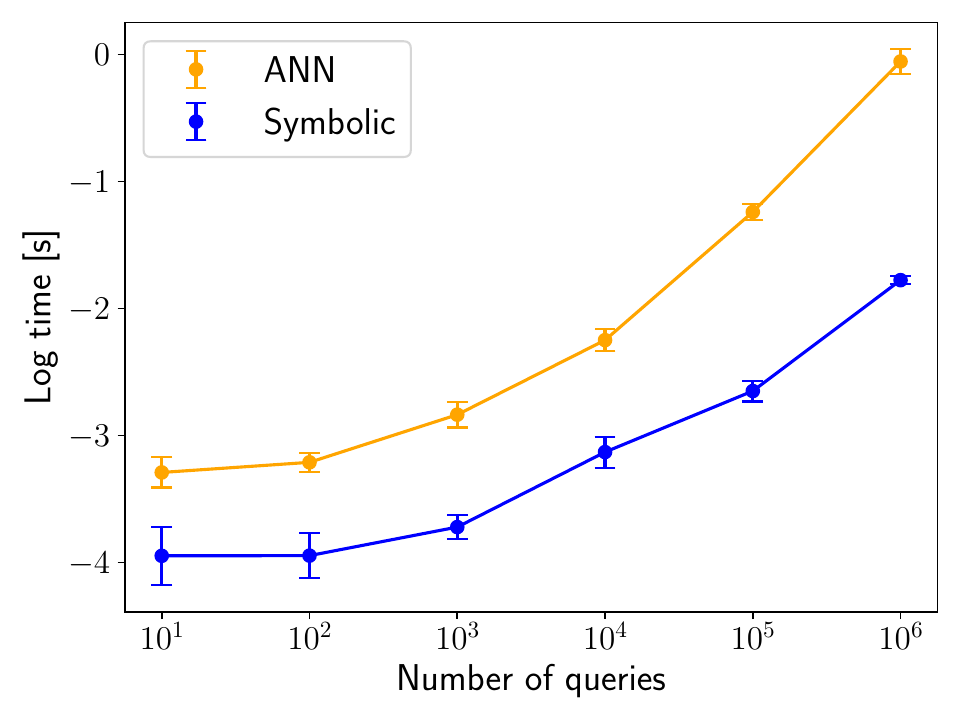}}
\subfigure[gradient of strain energy]
{\includegraphics[height=0.35\textwidth]{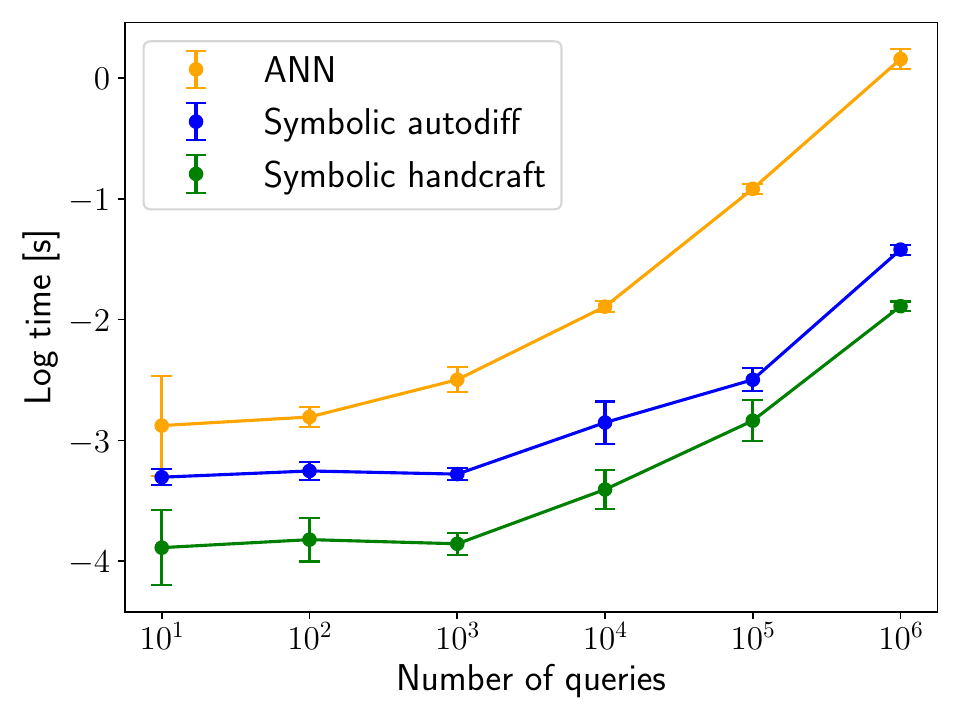}}
  \caption{
  Comparing execution times between an artificial neural network and a symbolic strain energy functional.  Shown are mean and standard deviation trends for (a) the strain energy and (b) its gradient with respect to the invariants. The gradient of a symbolic equation can be hand-crafted or obtained through automatic differentiation, as showcased in (b). The horizontal axis represents the number of query points in each experiment. Each experiment is replicated for 20 times with distinct random seeds.
}
  \label{fig:pr2-time-comp}
\end{figure}

\section{Discussion and Analysis}
\label{sec:discuss}
Thanks to the symbolic forms, classical analysis can be applied to investigate the entire range of the deformation gradient, determining when polyconvexity and growth conditions are violated. In our specific formulation, we can assess quasiconvexity by examining the convexity and non-decreasing conditions of shape functions $\psi_i(I_i)$. To verify coercivity, we analyze the energy functional's growth behavior as the invariants approach infinity, i.e., $I_1, I_2 \to \infty$. It is important to note that our goal is not to find the tightest bounds on the coefficients but rather to ensure that the derived equations satisfy the required conditions.

\subsection{Treloar Machine Learning Model}
The convexity $\psi_2(I_2) = c_{21}I_2$ is evident.  \alert{The non-decreasing condition requires $c_{21}\ge 0$}. For $\psi_1(I_1) = c_{11}I_1 + c_{12}\exp(c_{13}I_2)$ to be convex, it is required that $c_{12} \ge 0$, which can be confirmed by checking the positivity of its second-order derivative.  \alert{This function will be non-decreasing if its first derivative is non-negative, which can be easily determined when $c_{11} \ge 0$.}
By employing the Taylor series expansion for the exponential function and considering any positive integer $p$, we obtain the following inequality,

\begin{equation}
\psi(I_1, I_2) \ge \frac{c_{12}^p}{p!} I_1^p + c_{21}I_2.
\end{equation}

To ensure asymptotic satisfaction of Eq.~\eqref{eq:growth}, we can set $\alpha = \min(c_{12}^p/p!, c_{21})$ with $c_{21} > 0$ and $q = 1$, where $p$ is chosen to be a large value approaching infinity. However, it is essential to note that even for a very large $p$, the condition $q \ge p/(p-1)$ is marginally violated, and the fulfillment of conditions $c_{21} > 0$ and $c_{12} > 0$ may not guarantee coercivity based on the presented justification.

\subsection{Particle-reinforced Composite Machine Learning Model}

The convexity of $\psi_1(I_1) = c_{11}I_1 + c_{12}I_1^4 + \exp(c_{13}I_1)$ is ensured if its second-order gradient is non-negative, i.e., $12c_{12}I_1^2 + c_{13}^2\exp(c_{13}I_1) \ge 0$. Thus, it requires only that $c_{12} \ge 0$ to satisfy convexity.  
\alert{
To evaluate the non-decreasing condition, we consider $\psi_1^\prime(I_1) = c_{11} + 4c_{12}I_1^3 + c_{13} \exp(c_{13}I_1) \geq 0$. First, note that $I_1 = \lambda_1^2 + \lambda_2^2 + \lambda_3^2 > 0$. For very large values of $I_1 \rightarrow +\infty$, the exponential term will dominate, implying $c_{13} \geq 0$. Under the incompressibility condition ($\lambda_3 = 1 / (\lambda_1 \lambda_2)$), it can be shown that $I_1 \geq 2$ serves as a lower bound. Therefore, it suffices to have $c_{11} \geq -(32c_{12} + c_{13} \exp(2c_{13}))$.
}

Similarly, for $\psi_2(I_2) = c_{21}I_2 + c_{22}I_2^2\ln(I_2)$, the convexity requirement is $c_{22}(2\ln(I_2) + 3) \ge 0$, which is satisfied if $c_{22} \ge 0$. 
\alert{
For the non-decreasing condition, considering $I_2 = \lambda_1^2\lambda_2^2 + \lambda_1^2\lambda_3^2 + \lambda_2^2\lambda_3^2 \geq 0$, one can determine that the sufficient condition is $c_{21} \geq 0$.
}

By utilizing the Taylor series expansion for the exponential function, we have,

\begin{equation}
\psi_1(I_1) \ge c_{11}I_1 + c_{12}I_1^4 + \frac{c_{13}^4}{4!}I_1^4 \ge \left(c_{12} + \frac{c_{13}^4}{4!}\right)I_1^4.
\end{equation}
Furthermore, leveraging the fact that $\ln(x) \ge \frac{x-1}{x}$ for positive $x$ (as $x\ln(x)-x+1$ has a positive slope), we can write,
\begin{equation}
\psi_2(I_2) \ge c_{22}I_2^2 + (c_{21}-c_{22})I_2 \ge \hat{c}_{22}I_2^2,
\end{equation}
where $0 \le \hat{c}_{22} \le c_{22}$ accounts for the possibility of $c_{21} < c_{22}$.

Based on these bounds, if we choose $\alpha = \min(c_{12} + c_{13}^4/4!, \hat{c}_{22})$ and set $p=4 \ge 2$ and $q=2 \ge p/(p-1)$, the inequality in Eq.~\eqref{eq:growth} is satisfied, provided that $\alpha > 0$. Consequently, the sufficient conditions to ensure coercivity are $c_{22} > 0$ and $c_{12} > 0$.

\alert{
Notice that, within the training data regime of symbolic regression, and even slightly beyond that, the discovered symbolic function is non-decreasing; see the derivative function in the right Figure \ref{fig:pr2Psi1Symb}, which is positive. 
However, based on the above discussion, the equation found through symbolic regression violates the non-decreasing condition for $I_1 \rightarrow +\infty$ since $c_{13} < 0$. The learned neural network function $\psi_1(I_1)$ does not suffer from this issue because it satisfies this condition by construction. Since such a constraint has not been encoded during the search in symbolic regression, we may violate this condition, particularly beyond the training regime of symbolic regression.
To mitigate such issues at the symbolic regression discovery step, the simplest way would be to generate data from the learned neural network with a wider coverage of $I_1$ and denser data sampling to train the symbolic regression.  A more rigorous approach would be to limit the symbolic regression to search only in the space of all convex and non-decreasing functions. However, incorporating this second restriction is a challenging task for symbolic regression as a discrete search algorithm and requires further research in this area.
}

\section{Conclusion}
\label{sec:concl}
In this work, we present an efficient machine learning framework capable of directly inferring the mathematical expressions of multivariate elastic stored energy functionals from data that fulfill physics principles (e.g. , polyconvexity, material frame indifference) without sacrificing expressivity. A salient feature of this proposed model is the introduction of input convex neural networks for constructing the feature space.  This treatment enables us to retain the convexity of the feature mapping that ensures polyconvexity of the learned neural network hyperelastic models.  By expressing polyconvex hyperelastic models in the feature space, we can then apply SR on each of the feature space basis. This setting enables us to overcome the challenging curse-of-high-dimensionality issue well known in the SR literature while deducing interpretable mathematical models that can be easily comprehended, analyzed, and implemented. 
The compactness of the mathematical models also leads to computationally efficient constitute updates that require less arithmetic operations than those of deep neural networks and hence are well-suited for large-scale physics simulations where repeated executions of the models must be carried out in a large number of integration points. 

\section*{Acknowledgement}
The authors are supported by the National Science Foundation under grant contract CMMI-1846875, and the Dynamic Materials and Interactions Program from the Air Force Office of Scientific Research under grant contracts FA9550-21-1-0391 and FA9550-21-1-0027, 
and the MURI Grant No. FA9550-19-1-0318. These supports are gratefully acknowledged. 
The authors would like to thank Nhon N.  Phan for their valuable feedback and suggestions, which have improved the readability of the manuscript.
The views and conclusions contained in this document are those of the authors and should not be interpreted as representing official policies, either expressed or implied, of the sponsors, including the Army Research Laboratory or the U.S. Government. The U.S. Government is authorized to reproduce and distribute reprints for Government purposes notwithstanding any copyright notation herein.

\section*{Data Availability Statement}
Data supporting the findings of this study will be available at Mendeley Data upon the acceptance of this manuscript.  Codes used to generate the model will be available on GitHub upon the acceptance of this manuscript as well. 

\begin{appendices}

\section{Basic Convexity Relationships}
\label{appx:convx}

\begin{lemma}
A tensor-valued function $W(\tensor{F})$ is convex with respect to $\tensor{F}$ if its Hessian has the following property such that for any $\tensor{F}$ and $\delta \tensor{F}$,
\begin{equation}
\delta{\tensor{F}} : \frac{\partial^2 W(\tensor{F})}{\partial \tensor{F} \partial \tensor{F}} : \delta{\tensor{F}} \ge 0.
\end{equation}
\end{lemma}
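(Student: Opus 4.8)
The plan is to reduce the tensor-valued statement to the classical one-dimensional second-order characterization of convexity by restricting $W$ to line segments in its domain. First I would fix two arbitrary tensors $\tensor{F}, \tensor{G} \in \text{dom}(W)$ (assumed convex, so that segments stay admissible) and introduce the scalar function $\varphi(t) = W(\tensor{F} + t\,\tensor{H})$ with $\tensor{H} = \tensor{G} - \tensor{F}$ and $t \in [0,1]$. Since $W$ is twice differentiable, the chain rule gives $\varphi'(t) = \frac{\partial W}{\partial \tensor{F}}\big|_{\tensor{F} + t\tensor{H}} : \tensor{H}$, and differentiating once more, $\varphi''(t) = \tensor{H} : \frac{\partial^2 W}{\partial \tensor{F} \partial \tensor{F}}\big|_{\tensor{F} + t\tensor{H}} : \tensor{H}$.

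Next I would invoke the hypothesis with $\delta\tensor{F} = \tensor{H}$ at the point $\tensor{F} + t\tensor{H}$, which yields $\varphi''(t) \ge 0$ for every $t \in [0,1]$. A twice-differentiable real function with nonnegative second derivative on an interval is convex there; I would cite this as the standard scalar fact, noting that it follows from the mean value theorem or, equivalently, from Taylor's formula with integral remainder $\varphi(\lambda) = \varphi(0) + \lambda\varphi'(0) + \int_0^{\lambda} (\lambda - s)\,\varphi''(s)\, ds$. Convexity of $\varphi$ then reads $\varphi(\lambda) \le (1-\lambda)\varphi(0) + \lambda\varphi(1)$ for $\lambda \in (0,1)$, which unpacks to $W\big((1-\lambda)\tensor{F} + \lambda\tensor{G}\big) \le (1-\lambda) W(\tensor{F}) + \lambda W(\tensor{G})$. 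As $\tensor{F}$ and $\tensor{G}$ were arbitrary, this is precisely the convexity definition stated earlier in the excerpt.

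The only genuine obstacle here is bookkeeping rather than anything conceptual: one must carry out the chain-rule differentiation carefully in tensor notation, correctly tracking the double contractions when forming $\varphi''$, and one should record the (mild) standing assumption that $\text{dom}(W)$ is convex so the interpolant remains in the domain. An alternative that sidesteps differentiating $\varphi$ twice would be to argue directly from a supporting-hyperplane inequality, but since this lemma is intended as a primitive verification tool (it is used to justify the convexity of $I_1(\tensor{F})$, $I_2(\text{adj}\,\tensor{F})$, $I_3(\det\tensor{F})$ in Appendix~\ref{appx:convx}), the line-restriction argument is the most self-contained route and reduces the proof to three or four lines once the identity for $\varphi''$ is in hand.
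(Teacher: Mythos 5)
Your argument is correct. Note, however, that the paper does not actually prove this lemma at all --- its ``proof'' consists solely of a pointer to Lemma B.3 of Schr\"oder and Neff (2003) --- so there is no in-paper argument to compare against; your write-up supplies the standard self-contained justification that the paper outsources. The line-restriction reduction is exactly the right primitive: setting $\varphi(t) = W(\tensor{F} + t\,\tensor{H})$ with $\tensor{H} = \tensor{G} - \tensor{F}$, computing $\varphi''(t) = \tensor{H} : \frac{\partial^2 W}{\partial \tensor{F}\partial\tensor{F}}\big|_{\tensor{F}+t\tensor{H}} : \tensor{H} \ge 0$ from the hypothesis with $\delta\tensor{F} = \tensor{H}$, and invoking the scalar second-derivative test gives convexity along every segment, hence convexity of $W$. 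Your two housekeeping remarks are also the right ones to record: the statement tacitly assumes $W$ is twice differentiable (the Hessian must exist everywhere on the segment, not just at one point --- the lemma as written is slightly loose on this, since the inequality must hold at \emph{every} $\tensor{F}$, which your proof correctly uses), and the domain must be convex so that $\tensor{F} + t\tensor{H}$ stays admissible; for the paper's applications ($W$ defined on all of $\mathbb{R}^{3\times 3}$, or on the determinant's range) this is unproblematic. What your approach buys over the bare citation is that the lemma becomes verifiable within the paper and its hypotheses become explicit; what the citation buys is brevity and deference to a source that handles the tensorial bookkeeping once and for all.
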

\begin{proof}
See Lemma B.3 in \citep{schroder2003invariant}.
\end{proof}

\begin{lemma}\label{lemma:convxI1}
The mapping $W(\tensor{F}): \tensor{F}\to \text{tr}(\tensor{F}^T \tensor{F})$ is convex.  As a result,  the first invariant $I_1(\tensor{C})$ is a convex function of $\tensor{F}$.
\end{lemma}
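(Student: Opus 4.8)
The plan is to compute the second derivative of $W(\tensor{F}) = \tr(\tensor{F}^T\tensor{F}) = F_{iJ}F_{iJ} = \|\tensor{F}\|^2$ with respect to $\tensor{F}$ and verify the Hessian criterion from the preceding lemma. First I would observe that $W$ is just the squared Frobenius norm, which is a quadratic form in the components of $\tensor{F}$, so its second derivative is constant. Explicitly, $\partial W / \partial F_{iJ} = 2 F_{iJ}$, hence $\partial^2 W / \partial F_{iJ} \partial F_{kL} = 2\,\delta_{ik}\delta_{JL}$. Therefore for any perturbation $\delta\tensor{F}$,
\begin{equation}
\delta\tensor{F} : \frac{\partial^2 W}{\partial \tensor{F} \partial \tensor{F}} : \delta\tensor{F} = 2\, \delta F_{iJ}\, \delta F_{iJ} = 2 \|\delta\tensor{F}\|^2 \ge 0,
\end{equation}
which establishes convexity of $W$ by the Hessian lemma above (in fact $W$ is strictly convex since equality forces $\delta\tensor{F} = \tensor{0}$). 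Alternatively, and perhaps more transparently, one can note that $\tensor{F}\mapsto\|\tensor{F}\|$ is a norm, hence convex, and squaring a non-negative convex function yields a convex function — but the direct Hessian computation is cleaner here since the problem setup already provides the Hessian criterion.

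For the second assertion, I would recall that $I_1(\tensor{C}) = \tr(\tensor{C}) = \tr(\tensor{F}^T\tensor{F}) = W(\tensor{F})$ by the definition of $\tensor{C}$ given in Section~\ref{sec:kinematics}. Thus $I_1$, viewed as a function of $\tensor{F}$ through the composition $\tensor{F}\mapsto\tensor{C}(\tensor{F})\mapsto\tr(\tensor{C})$, is literally the same map as $W$, so convexity of $I_1$ in $\tensor{F}$ is immediate from the first part. No composition lemma is even needed — the identification is exact.

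I do not anticipate a serious obstacle: the entire content is the elementary fact that a quadratic form with positive-semidefinite coefficient tensor is convex. The only point requiring mild care is keeping the index bookkeeping straight — distinguishing the ``spatial'' index $i$ from the ``material'' index $J$ in $F_{iJ}$, and making sure the fourth-order Hessian $2\,\delta_{ik}\delta_{JL}$ is contracted correctly against $\delta\tensor{F}$ on both sides — but this is routine. If one instead wanted to prove convexity of $I_1$ directly from the difference-quotient definition without invoking the Hessian lemma, one would expand $\|\lambda\tensor{F}_1 + (1-\lambda)\tensor{F}_2\|^2$ and use $2(\tensor{F}_1:\tensor{F}_2) \le \|\tensor{F}_1\|^2 + \|\tensor{F}_2\|^2$; either route is short.
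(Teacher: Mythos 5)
Your proposal is correct and follows essentially the same route as the paper: compute the constant Hessian $\partial^2 W/\partial F_{iJ}\partial F_{kL} = 2\delta_{ik}\delta_{JL}$ and contract it with $\delta\tensor{F}$ twice to obtain $2\|\delta\tensor{F}\|^2 \ge 0$, then identify $I_1 = \tr(\tensor{C}) = \tr(\tensor{F}^T\tensor{F}) = W(\tensor{F})$ directly. The additional remarks on strict convexity and the norm-based alternative are fine but not needed.
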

\begin{proof}
Hessian of the function with respect to $\tensor{F}$ is as follows:
\begin{equation}
\frac{\partial^2 W }{\partial F_{iJ} \partial F_{kL}} = 2 \delta_{ki} \delta_{LJ}.
\end{equation}
By setting $\tensor{A} = \delta \tensor{F}$,  the convexity criterion becomes as follows,
\begin{equation}
\tensor{A} : \frac{\partial^2 W(\tensor{F})}{\partial \tensor{F} \partial \tensor{F}} :\tensor{A}  
= 2 A_{iJ} \delta_{ki} \delta_{LJ} L_{kL}
= 2 A_{iL} A_{iL} \ge 0.
\end{equation}
\end{proof}

\begin{lemma}
The second invariant $I_2(\tensor{C})$ is a convex function with respect to $\text{adj}(\tensor{F})$.
\end{lemma}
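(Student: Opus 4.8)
The plan is to mimic the proof of Lemma~\ref{lemma:convxI1} almost verbatim, since $I_2(\tensor{C})$ expressed in terms of $\text{adj}(\tensor{F})$ has exactly the same structure as $I_1(\tensor{C})$ expressed in terms of $\tensor{F}$. The key observation is the identity $\text{adj}(\tensor{C}) = \text{adj}(\tensor{F}^T\tensor{F}) = \text{adj}(\tensor{F})^T \text{adj}(\tensor{F})$, so that
\begin{equation}
I_2(\tensor{C}) = \tr(\text{adj}(\tensor{C})) = \tr\!\left(\text{adj}(\tensor{F})^T \text{adj}(\tensor{F})\right) = \|\text{adj}(\tensor{F})\|^2.
\end{equation}
Writing $\tensor{H} = \text{adj}(\tensor{F})$ as the independent argument, the claim reduces to showing that the map $\tensor{H} \mapsto \tr(\tensor{H}^T\tensor{H})$ is convex in $\tensor{H}$ — which is precisely the content of Lemma~\ref{lemma:convxI1} with $\tensor{F}$ renamed to $\tensor{H}$.

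First I would record the cofactor identity above (it follows from $\text{adj}(\tensor{A}\tensor{B}) = \text{adj}(\tensor{B})\,\text{adj}(\tensor{A})$ and $\text{adj}(\tensor{A}^T) = \text{adj}(\tensor{A})^T$), so that $I_2$ is exhibited as the squared Frobenius norm of $\text{adj}(\tensor{F})$. Then I would invoke the Hessian computation from Lemma~\ref{lemma:convxI1}: for $W(\tensor{H}) = \tr(\tensor{H}^T\tensor{H})$ one has $\partial^2 W/\partial H_{iJ}\partial H_{kL} = 2\delta_{ik}\delta_{JL}$, and for any perturbation $\tensor{A} = \delta\tensor{H}$,
\begin{equation}
\tensor{A} : \frac{\partial^2 W}{\partial \tensor{H}\partial \tensor{H}} : \tensor{A} = 2 A_{iJ} A_{iJ} \ge 0,
\end{equation}
so $W$ is convex in $\tensor{H}$ by the Hessian criterion (the first lemma of this appendix). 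Since $\text{adj}(\tensor{F})$ is the designated convexity variable, this is exactly the assertion that $I_2(\tensor{C})$ is convex with respect to $\text{adj}(\tensor{F})$.

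There is essentially no serious obstacle here; the only point requiring a sentence of care is the cofactor algebra $\text{adj}(\tensor{F}^T\tensor{F}) = \text{adj}(\tensor{F})^T\text{adj}(\tensor{F})$, which is where the choice of $\text{adj}(\tensor{F})$ (rather than $\tensor{F}$) as the convexity argument becomes essential — $I_2$ is \emph{not} convex as a function of $\tensor{F}$ itself, so the reduction must be made in terms of the adjugate. Once that identity is in place, the argument is identical in form to Lemma~\ref{lemma:convxI1}, and the natural move in the paper is simply to cite that lemma (or Lemma~B.3 / the relevant statement in \citet{schroder2003invariant}) and note the substitution, rather than repeat the index computation. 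This lemma, together with Lemma~\ref{lemma:convxI1} and the analogous (trivial) convexity of $I_3 = (\det\tensor{F})^2$ in $\det\tensor{F}$, supplies the three ingredients invoked in the proof of Lemma~\ref{lemma:addetivInvars}.
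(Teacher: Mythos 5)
Your proposal is correct and follows essentially the same route as the paper: both reduce $I_2$ to $\tr\left(\text{adj}(\tensor{F})^T\text{adj}(\tensor{F})\right)$ via the cofactor/adjugate product and transpose identities, and then invoke the convexity of the squared Frobenius norm already established for $I_1$. The only cosmetic difference is that the paper cites \citet{schroder2003invariant} for the adjugate identities rather than deriving them, and leaves the final step as a direct appeal to the earlier lemma instead of restating the Hessian computation.
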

\begin{proof}
One can obtain the following relationships for the second invariant,
\begin{align}
I_2
&=
\text{tr}
\left(
	\text{adj}(\tensor{F}^T \tensor{F})
\right)\\
&=
\text{tr}
\left(
	\text{adj}(\tensor{F})
	\text{adj}(\tensor{F}^T)
\right) \quad \text{from Lemma A.6.3 in \citep{schroder2003invariant}}\\
&=
\text{tr}
\left(
	\text{adj}(\tensor{F})
	\text{adj}(\tensor{F})^T
\right) \quad \text{from Lemma A.6.4 in \citep{schroder2003invariant}}\\
&=
\text{tr}
\left(
	\text{adj}(\tensor{F})^T
	\text{adj}(\tensor{F})
\right).
\end{align}
Utilizing the last equation, in conjunction with Lemma \ref{lemma:convxI1}, we deduce that the second invariant is a convex function with respect to $\text{adj}(\tensor{F})$.
\end{proof}

\begin{lemma}
The third invariant $I_3(\tensor{C})$ is a convex function with respect to $\text{det}(\tensor{F})$.
\end{lemma}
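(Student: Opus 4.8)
The plan is to reduce the claim to the elementary fact that the scalar map $t \mapsto t^2$ is convex. Recall from Section~\ref{sec:kinematics} that $I_3 = \det(\tensor{C}) = \det(\tensor{F}^T\tensor{F}) = (\det\tensor{F})^2 = J^2$. Hence, regarded as a function of the single scalar argument $d := \det(\tensor{F})$, the third invariant is simply $I_3(d) = d^2$, and the question of convexity ``with respect to $\det\tensor{F}$'' becomes a one-dimensional one.

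First I would observe that for a scalar-valued function of a scalar variable, the Hessian criterion of the first lemma of this appendix collapses to the ordinary second-derivative test. Computing $\tfrac{d^2}{d d^2}\big(d^2\big) = 2 \ge 0$ then establishes convexity of $I_3$ with respect to $d = \det\tensor{F}$ on all of $\mathbb{R}$, in particular on the physically admissible range $J > 0$. Alternatively, one may verify the defining inequality directly: for any $d_1, d_2 \in \mathbb{R}$ and $\lambda \in (0,1)$,
\begin{equation}
\lambda d_1^2 + (1-\lambda) d_2^2 - \big(\lambda d_1 + (1-\lambda) d_2\big)^2 = \lambda(1-\lambda)(d_1 - d_2)^2 \ge 0,
\end{equation}
which is exactly the convexity condition.

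There is essentially no obstacle here; the only point worth stating explicitly is the identity $\det(\tensor{C}) = (\det\tensor{F})^2$, which makes the dependence on $\det\tensor{F}$ transparent and turns the tensor-level convexity question into a trivial scalar one. Together with the two preceding lemmas — convexity of $I_1$ with respect to $\tensor{F}$ and of $I_2$ with respect to $\text{adj}\,\tensor{F}$ — this supplies the last ingredient invoked in the proof of Lemma~\ref{lemma:addetivInvars}.
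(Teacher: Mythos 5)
Your proposal is correct and follows essentially the same route as the paper: both reduce the claim to the identity $I_3 = \det(\tensor{F}^T\tensor{F}) = (\det\tensor{F})^2$ and then invoke convexity of the scalar square function. Your explicit verification via the second-derivative test and the defining inequality merely spells out what the paper summarizes as ``a quadratic function of $\det(\tensor{F})$ is convex.''
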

\begin{proof}
By utilizing basic properties of matrix operations, we can derive the following relationships:
\begin{align}
I_3=
\text{det}(\tensor{C})
=
\text{det}(\tensor{F}^T\tensor{F})
=
\text{det}(\tensor{F}^T) \text{det}(\tensor{F})
=
\text{det}(\tensor{F})^2.
\end{align}
Therefore, the third invariant is a quadratic function of $\text{det}(\tensor{F})$, making it convex in relation to this determinant.
\end{proof}

\section{Model Setup and Hyperparameters}
In the PNAM, for each shape function, both examples employ ICNN models using a single hidden layer consisting of 50 neurons. The activation functions for the first and second layers are softplus and softplus2, respectively. To ensure non-negativity of the weights, ReLU activation is utilized. The optimization process utilizes the ADAM optimizer with an initial learning rate of 0.001. Strain invariants are normalized using the MinMaxScaler method in both examples. Furthermore, in the first example, the stress data undergoes a pre-processing step and is scaled by a factor of 0.05.

\subsection{Treloar's Data}
\label{sec:paramTrel}
In the vanilla ANN model, one hidden layer with 50 neurons is utilized with an ELU activation layer \citep{clevert2015fast} (see the following remark.) Its training parameters remain consistent with those of the PNAM.

In the SR configuration of $\psi_1(I_1)$, we confine the binary operators to addition and multiplication, while the unary operator is limited to the exponential function (exp). The time budget is capped at 6 minutes or a maximum of 100 iterations, whichever comes first. The expression trees' maximum depth and size are both set to 30. As a result of the observed simplicity in the learned ICNN function for $\psi_2(I_2)$, the maximum size and depth of the expression trees is reduced to 10, and the exponential function is excluded as a possible operator for applying SR to find $\psi_2(I_2)$.

\remark{
The ELU activation function is defined as follows,
\begin{equation}
	\text{ELU}(x) = x H(x) + (\exp(x) - 1) (1 - H(x)),
\end{equation}
where $H(x)$ is the Heaviside function.
}

\subsection{Composite Data}
\label{sec:paramComp}
In the SR configuration of $\psi_1(I_1)$, the binary operators are restricted to addition and multiplication, and the unary operators are limited to the exponential (exp) and logarithm (ln) functions. The time budget is set to 6 minutes or a maximum of 100 iterations, whichever is reached first. The expression trees are constrained to a maximum depth of 10 and a size of 50. Classical energy functionals rarely employ nested ln or exp operators; therefore, during the SR iterations, we exclude the possibility of such combinations.  A similar SR configuration is employed for $\psi_2(I_2)$ with the following differences: the expression trees have a maximum size of 15 and a maximum depth of 5.

\section{Remark on input data normalization}
Input data normalization is a practical technique to enhance the training of machine learning models \citep{goodfellow2016deep}, particularly when the input or output has more than one feature. StandardScaler and MinMaxScaler are two common normalization methods. In StandardScaler, the input features are linearly mapped to have zero mean and unit standard deviation, i.e., $(\vec{x} - \vec{\mu}_x) / \vec{\sigma}_x$ where $\vec{\mu}_x$ and $\vec{\sigma}_x$ are mean and standard deviation per feature, respectively. In MinMaxScaler, the input features are linearly mapped to have zero min and unit max, i.e., $(\vec{x} - \vec{m}_x) / (\vec{M}_x - \vec{m}_x)$ where $\vec{m}_x$ and $\vec{M}_x$ are minimum and maximum values per feature, respectively.  

Since our formulation requires the convexity of the built neural network, the following lemma is provided to ensure that common linear normalization does not affect our utilized formulation

\begin{lemma}\label{lemma:affineConvx}
Affine transformation preserves convexity.
\end{lemma}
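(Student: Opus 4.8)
The statement to prove is Lemma~\ref{lemma:affineConvx}: \emph{Affine transformation preserves convexity.} The plan is to interpret this precisely as: if $f:\mathbb{R}^n\to\mathbb{R}$ is convex and $T:\mathbb{R}^m\to\mathbb{R}^n$ is affine, i.e.\ $T(\vec{x}) = \tensor{A}\vec{x} + \vec{b}$ for some matrix $\tensor{A}$ and vector $\vec{b}$, then the composition $g = f\circ T:\mathbb{R}^m\to\mathbb{R}$ is convex. (In our application $f$ is the ICNN-type map and $T$ is the linear pre-scaling of inputs, e.g.\ MinMaxScaler or StandardScaler, whose invertibility is irrelevant to the convexity conclusion.)

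First I would fix arbitrary $\vec{x},\vec{y}$ in the domain and $\lambda\in(0,1)$, and use the fact that an affine map commutes with convex combinations: $T(\lambda\vec{x} + (1-\lambda)\vec{y}) = \lambda T(\vec{x}) + (1-\lambda)T(\vec{y})$, which follows immediately from linearity of $\tensor{A}$ and the identity $\lambda\vec{b} + (1-\lambda)\vec{b} = \vec{b}$. Then I would apply the definition of convexity of $f$ (as stated in the Convexity definition in the excerpt) to the two points $T(\vec{x})$ and $T(\vec{y})$ to obtain
\begin{equation}
g(\lambda\vec{x} + (1-\lambda)\vec{y}) = f\bigl(\lambda T(\vec{x}) + (1-\lambda)T(\vec{y})\bigr) \le \lambda f(T(\vec{x})) + (1-\lambda) f(T(\vec{y})) = \lambda g(\vec{x}) + (1-\lambda) g(\vec{y}),
\end{equation}
which is exactly the convexity inequality for $g$. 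The one point requiring a word of care is that $T$ maps the domain of $g$ into the domain of $f$ and that the domain of $g$ (the preimage of a convex set under an affine map) is itself convex, so that the convex combination $\lambda\vec{x} + (1-\lambda)\vec{y}$ stays in the domain; this is routine but worth stating.

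There is essentially no hard part here — the entire proof is the two-line chain above — so the ``main obstacle'' is purely one of bookkeeping: making sure the statement is phrased at the right level of generality (affine, not merely linear; arbitrary finite dimensions so it covers both the vector inputs and the tensorial setting used elsewhere) and noting explicitly why this lemma suffices for the paper's purpose, namely that linear normalization of the strain invariants feeding into the ICNN does not destroy the convexity guarantees established in the earlier propositions and in Lemma~\ref{lemma:composConvx}. I would close by remarking that the same argument shows affine maps also preserve \emph{strict} convexity when $\tensor{A}$ is injective, though that refinement is not needed here.
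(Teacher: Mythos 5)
Your proof is correct and follows exactly the same route as the paper's: rewrite $\tensor{A}(\lambda\vec{x}+(1-\lambda)\vec{y})+\vec{b}$ as the convex combination $\lambda(\tensor{A}\vec{x}+\vec{b})+(1-\lambda)(\tensor{A}\vec{y}+\vec{b})$ and then invoke the convexity of $f$. The only (harmless) difference is that you allow a rectangular $\tensor{A}$ and add remarks on domains and strict convexity, whereas the paper states the lemma for a square matrix.
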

\begin{proof}
    Assuming the convexity of $f(\vec{x})$ in $\vec{x} \in \mathbb{R}^n$ and defining $g(\vec{x})= f(\tensor{A}\vec{x} + \vec{b})$ where $\tensor{A}\in \mathbb{R}^{n\times n}, \vec{b}\in\mathbb{R}^n$ are constants, then for two arbitrary $\vec{x}_1, \vec{x}_2 \in \mathbb{R}^n$
    \begin{align}
        g\left(
            \gamma \vec{x}_1 + (1-\gamma) \vec{x}_2
        \right)
        &=
        f\left(
            \vec{A}
            \left(
                \gamma \vec{x}_1 + (1-\gamma) \vec{x}_2
            \right)
            +
            \vec{b}
        \right)
        \\
        &=
        f\left(
            \gamma 
                \left(
                    \vec{A} \vec{x}_1 + \vec{b}
                \right)
            +
            (1-\gamma)
                \left(
                    \vec{A} \vec{x}_2 + \vec{b}
                \right)
        \right)
        \\
        &\le
            \gamma 
                f\left(
                    \vec{A} \vec{x}_1 + \vec{b}
                \right)
            +
            (1-\gamma)
                f\left(
                    \vec{A} \vec{x}_2 + \vec{b}
                \right)
            \quad \text{using convexity of} \ f
        \\
        &=
        \gamma g(\vec{x}_1)
        +
        (1-\gamma) g(\vec{x}_2).
    \end{align}
    
\end{proof}

The affine transformation lemma guarantees that the usual linear data normalization (standardization) in machine learning, such as \texttt{StandardScaler} and \texttt{MinMaxScaler} \cite{pedregosa2011scikit}, do not alter the convexity results.

\section{Finite Element Discretization}
For ease of implementation, we use penalty approach to weakly impose incompressibility. Although this method does dot enforce incompressibility exactly, it leads to a simple system of nonlinear equations based solely on the displacement field. Other multi-field formulations \citep{weiss1996finite,schroder2017stability} are possible but they are computationally more demanding which is not considered in this work.

For numerical stability reasons, one may prefer to additively decompose the strain energy functional to volumetric $\psi_{\text{vol}}$ and deviatoric $\psi_{\text{dev}}$ parts, i.e., $\psi(\tensor{C}) = \psi_{\text{dev}}(\bar{\tensor{C}}) + \psi_{\text{vol}}(J)$. In the nearly incompressible regime, the volumetric part acts as a penalty term to enforce $|J - 1| < \epsilon$. In this work, the penalty term is chosen as follows,
\begin{equation}
    \psi_{\text{vol}} = 
    \frac{1}{2} \kappa \ln^2(J),
\end{equation}
where $\kappa$ is the penalty parameter. Higher value of this parameter enforces incompressibility more strictly, however for numerical instability issues one may need to choose it reasonably high. The deviatoric part of the right Cauchy-Green is defined,
\begin{equation}
    \bar{\tensor{C}} = 
    J^{-\frac{2}{3}} \tensor{C}.
\end{equation}

The unknown displacement field $\vec{U}(\vec{X})$ defined over the undeformed domain $\Omega_0$ is the stationary point of the total energy functional when neglecting body forces and having non-zero traction boundary conditions, 
\begin{align}
    &\vec{U} = \underset{\vec{U}}{\argmin} \ \Pi(\vec{U}(\vec{X})),
    \\
    &\Pi(\vec{U}) = \int_{\Omega_0} \psi(\vec{U}(\vec{X})) \ d\Omega.
\end{align}

To find the minima, we discretize the domain using hexahedral elements and use linear basis functions for the displacement field,
\begin{align}
    \Pi(\vec{U}) \approx 
    \Pi(\bar{\vec{U}}) =
    \sum_{i=1}^{N_{\text{elem}}}
    \int_{\Omega^{e}_i} 
    \psi\left(
    \vec{N}^T(\vec{X})\bar{\vec{U}}
    \right) \ d\Omega,
\end{align}
where interpolation using finite element basis is utilized, i.e., $\vec{U} = \vec{N}^T(\vec{X})\bar{\vec{U}}$, and $\vec{N}$ is the linear basis function for the vector field in 3D. The element-level integration is performed using a second-order quadrature rule. To solve the nonlinear system of equations based on the nodal displacement $\bar{\vec{U}}$, we employ the Newton-Raphson method. For more detailed information, readers are referred to \cite{belytschko2014nonlinear}. 
Our implementation is carried out using \texttt{FEniCS} \citep{alnaes2015fenics}.

\end{appendices}

\bibliographystyle{plainnat}
\bibliography{main}

\end{document}